\theoremstyle{definition}
\newtheorem{theorem}{Theorem}
\newcommand{\rev}[1]{{#1}}
\def\bra#1{\mathinner{\langle{#1}|}}
\def\ket#1{\mathinner{|{#1}\rangle}}
\newcommand{\proj}[1]{\ket{#1}\!\!\bra{#1}}
\newcommand{\Or}{\mathcal{O}}
\newcommand{\cmark}{\textcolor{green}{\ding{51}}}%
\newcommand{\xmark}{\textcolor{red}{\ding{55}}}%
\newcommand{\DeptMath}{Department of Mathematics, University of California, Berkeley, CA 94720, USA}
\newcommand{\LBLMath}{Applied Mathematics and Computational Research Division, Lawrence Berkeley National Laboratory, Berkeley, CA 94720, USA}
\newcommand{\Caltechappliedscience}{Division of Engineering and Applied Science, California Institute of Technology, Pasadena, California 91125, USA}
\newcommand{\CaltechChem}{Division of Chemistry and Chemical Engineering, California Institute of Technology, Pasadena, California 91125, USA}
\begin{document}

\begin{CJK*}{UTF8}{mj}
\title{Coupled Lindblad pseudomode theory for simulating open quantum systems}

\author{Zhen Huang}
\affiliation{\DeptMath}
\author{Gunhee Park(박건희)}
\affiliation{\Caltechappliedscience}
 
\author{Garnet Kin-Lic Chan}
\affiliation{\CaltechChem}
\author{Lin Lin}
\thanks{linlin@math.berkeley.edu}
\affiliation{\DeptMath}
\affiliation{\LBLMath}

\date{\today}

\begin{abstract}Coupled Lindblad pseudomode theory is a promising approach for simulating non-Markovian quantum dynamics on both classical and quantum platforms, with dynamics that can be realized as a quantum channel. We provide theoretical evidence that the number of coupled pseudomodes only needs to scale as $\mathrm{polylog}(T/\varepsilon)$ in the simulation time $T$ and precision $\varepsilon$. Inspired by the realization problem in control theory, we also develop a robust numerical algorithm for constructing the coupled modes that avoids the non-convex optimization required by existing approaches. We demonstrate the effectiveness of our method by computing population dynamics and absorption spectra for the spin-boson model. This work provides a significant theoretical and computational improvement to the coupled Lindblad framework, which impacts a broad range of applications from classical simulations of quantum impurity problems to quantum simulations  on near-term quantum platforms.

\end{abstract}

\maketitle
\end{CJK*}

\emph{Introduction.--}
Simulating the non-Markovian dynamics of open quantum systems linearly coupled to Gaussian environments is a central challenge in quantum science~\cite{RivasHuelga2012,BreuerPetruccione2002,BerkelbachThoss2020,LeggettChakravartyDorseyetal1987}, particularly beyond the weak-coupling regime. When simulating the non-Markovian dynamics on a classical or quantum computer via an approximate representation, two key questions arise: (1) \emph{Efficiency}: Can the simulation be performed using minimal computational resources? (2) \emph{Physicality}: Does the approximate dynamics correspond to a physically realizable process?

We focus on approximation schemes that employ a finite number of auxiliary bath modes (often called the bath, environment, discrete modes, or pseudomodes in various settings). These auxiliary modes effectively capture the influence of the environment on the system dynamics. \rev{The efficiency of such schemes is characterized by the number of modes required to approximate the environment. The goal is to simulate all bounded system observables up to time \(T\) within precision \(\varepsilon\). This task is often reduced to reproducing the bath correlation functions (BCFs) with comparable precision~\cite{TamascelliSmirneHuelgaetal2018, MascherpaSmirneHuelgaetal2017, vilkoviskiy2023bound, LiuLu2025, HuangLinParketal2024}. 
The approximation scheme is considered efficient if the number of modes scales as $\text{polylog}(T/\varepsilon)$.}

\rev{Physicality requires that the combined system-bath dynamics is formulated as a quantum channel, ensuring completely positive and trace-preserving (CPTP) dynamics.} \rev{Lindbladian dynamics~\cite{Lindblad1976, Gorini1976} serves as a standard example of physical dynamics.}
The CPTP property, in turn, guarantees numerical stability when simulating the dynamics on a classical computer. Physicality is also essential for efficient implementation on a quantum computer~\cite{CW17, LiWang2023, DingLiLin2024, Lemmer2018, Kang2024,BertrandBesserveFerreroetal2024}.

\Cref{tab:compare_method} summarizes and compares existing schemes. The most widely used approach is the unitary discrete mode representation, where both the system and auxiliary modes evolve under unitary dynamics~\cite{Prior2010, DeVegaInesSchollwock2015}. Despite its wide usage, the number of modes scales linearly in the simulation time $T$~\cite{DeVegaInesSchollwock2015, WoodsCramerPlenio2015, WoodsPlenio2016}. Intuitively, this limitation arises because unitary dynamics of a finite system lack dissipation, making it incapable of accurately modeling BCFs that decay over time. 

\begin{table*}[ht]
\begin{tabular}{c|c|c|c|l}
\hline
Method & Number of & Quantum & Dissipation & Reliable numerical \\ 
 & modes & Channel & &  algorithm \\ \hline
Unitary mode~\cite{Prior2010,DeVegaInesSchollwock2015,WoodsPlenio2016}  & $\mathrm{poly}(T)\mathrm{polylog}(1/\varepsilon)$        & \cmark                                           & \xmark                                               &                                      \cmark \\ \hline
Lorentzian pseudomode~\cite{TrivediMalzCirac2021}  & $\mathrm{poly}(T / \varepsilon)$ & \cmark                                           & \cmark                                               &                                    
  ? \\ \hline
Non-Hermitian pseudomode~\cite{Lambert2019, Pleasance2020}  & $\mathrm{polylog}(T / \varepsilon)$  & \xmark                                           & \cmark                                               &  
  \cmark \\\hline
Quasi-Lindblad pseudomode~\cite{ParkHuangZhuetal2024,ThoennissVilkoviskiyAbanin2025}  & $\mathrm{polylog}(T /\varepsilon)$  & \xmark                                           & \cmark                                               &  
  \cmark \\\hline
Previous works on coupled Lindblad \cite{MascherpaSmirneSomozaEtAl2020,LednevGarciaFeist2024, Dorda2014, Dorda2015} & ?  & \cmark                                           & \cmark                                               & 
  ? \\\hline
\textbf{This work} & $\mathrm{polylog}(T / \varepsilon)$   & \cmark                                           & \cmark                                               &            
  \cmark 
\\\hline
\end{tabular}
\caption{Comparison of finite mode approximations of the environment for simulating non-Markovian dynamics. The expressions under the ``Number of modes" column indicate provable scaling for approximating the bath correlation function up to time $T$ within precision $\varepsilon$. 
}
\label{tab:compare_method}
\end{table*}

The pseudomode theory~\cite{Garraway1997, Dalton2001,
MazzolaManiscallcoPiiloetal2009, TamascelliSmirneHuelgaetal2018,
Somoza2019, MascherpaSmirneSomozaEtAl2020,Chen_2019,Li2021,Lorenzoni2024,lorenzoni2025} aims at addressing this limitation by introducing dissipation in order to model energy relaxation correctly. In Refs.~\cite{Garraway1997, Dalton2001, TamascelliSmirneHuelgaetal2018}, the environment is represented by bath modes, each subject to Lindblad dissipation. \rev{In this scheme,} the spectral density \rev{is expressed as} a sum of Lorentzians, which \rev{we refer} to as a Lorentzian pseudomode. However, a fundamental drawback is that the tail of a Lorentzian exhibits only inverse polynomial rather than exponential decay in the frequency domain.
Theoretical analyses indicate that even for smooth spectral densities, the number of pseudomodes scales as \( \text{poly}(T/\varepsilon) \)~\cite{TrivediMalzCirac2021}, though the preconstant can be smaller than that of unitary discrete modes. 

Recent advances in pseudomode methods, including non-Hermitian~\cite{Lambert2019, Pleasance2020, Cirio2023, MenczelFunoCirio2024etal, CirioLuoLiangetal2024} and quasi-Lindblad pseudomodes~\cite{ParkHuangZhuetal2024, ThoennissVilkoviskiyAbanin2025}, have significantly improved the efficiency. Under certain analytic conditions on the spectral density, the \rev{required} number of pseudomodes scales as \( \mathrm{polylog}(T/\varepsilon) \)~\cite{vilkoviskiy2023bound, ThoennissVilkoviskiyAbanin2025}.  However, these dynamics are not completely positive (CP) and cannot be realized as quantum channels. 

All the finite mode approximations discussed so far are decoupled, meaning that the modes interact with the system but not with one another. There is another class of approximation schemes that allows couplings between pseudomodes while preserving the Lindblad form~\cite{MascherpaSmirneSomozaEtAl2020, LednevGarciaFeist2024, Dorda2014, Dorda2015}, and hence the resulting dynamics \rev{is physical}. Although introduced in different contexts, we collectively refer to these as \emph{coupled Lindblad pseudomode} theory. In particular, empirical studies in Refs.~\cite{MascherpaSmirneSomozaEtAl2020, LednevGarciaFeist2024} demonstrated that only a small number of such coupled pseudomodes are sufficient to \rev{approximate the BCF accurately}. However, it remains unclear whether \rev{this provides a superior asymptotic scaling or simply a smaller prefactor.} Moreover, the construction in~\cite{MascherpaSmirneSomozaEtAl2020, LednevGarciaFeist2024, Dorda2014, Dorda2015} relies on non-convex optimization, which can be challenging in practice.

The contributions of this Letter are twofold: (1) We establish a direct connection between the coupled Lindblad and quasi-Lindblad pseudomode theory. Combined with the theoretical results of Ref.~\cite{ThoennissVilkoviskiyAbanin2025} on spectral density fitting,
\rev{this work provides a theoretical justification for the efficiency of the coupled Lindblad pseudomodes with $\mathrm{polylog}(T/\varepsilon)$ scaling.} (2) Inspired by the \emph{realization problem} in control theory~\cite{MayoAntoulas2007}, we develop a robust numerical algorithm for constructing the coupled modes that avoids the non-convex optimization used in Refs.~\cite{MascherpaSmirneSomozaEtAl2020, LednevGarciaFeist2024, Dorda2014, Dorda2015}. This can significantly simplify the \rev{construction of} coupled Lindblad pseudomodes, \rev{yielding higher accuracy while maintaining a minimal number of modes.} Taken together, this provides a firm theoretical and computational foundation for the widespread application of \rev{pseudomode-based simulations}.

 \vspace{1em}
 
 \emph{Model setup.--}
For concreteness, we consider a spin-boson Hamiltonian $\hat{H} = \hat{H}_{\text{S}} + \hat{H}_{\text{B}} + \hat{H}_{\text{SB}}$, where $\hat H_{\text B} = \int_{0}^{\infty} \omega \hat b_{\omega}^\dagger \hat b_{\omega}\mathrm d \omega$, $\hat{H}_{\text{SB}} = \hat{S}\hat{B}$, $\hat{S},\hat{B}$ are Hermitian, and $\hat{B} =  \int_{0}^{\infty} \sqrt{J(\omega)} (\hat b_{\omega} +\hat b_{\omega}^\dagger)\mathrm d\omega $. Here, $\hat{b}_\omega$ is a bosonic annihilation operator and $J(\omega)$ is the spectral density. Assuming an initially factorized state $\hat{\rho}(0) = \hat{\rho}_{\text{S}}(0) \otimes \hat{\rho}_{\text{B}}(0)$, where $\hat{\rho}_{\text{B}}(0)$ is the thermal equilibrium state of $\hat H_{\text B}$, our goal is to compute the dynamics of the system-reduced density operator $\hat{\rho}_{\text{S}}(t) = \operatorname{tr}_{\text{B}}\left({\mathrm{e}}^{-\mathrm i\hat{H}t} \hat{\rho}(0) {\mathrm{e}}^{\mathrm i\hat{H}t}\right)$, where the influence of environment is fully captured by the BCF,
\begin{equation}
    C(t) =\operatorname{tr}\left(\hat{B}(t) \hat{B}(0)  \hat{\rho}_{\text B}(0) \right),
    \label{eq:BCF}
\end{equation}
with $\hat{B}(t) = {\mathrm{e}}^{\mathrm i\hat{H}_{\text B}t} \hat{B} {\mathrm{e}}^{-\mathrm i\hat{H}_{\text B}t}$.

\vspace{1em} 
\emph{Pseudomode theory.--}
Pseudomode theory introduces a finite number of auxiliary modes (denoted as A). If the pseudomode BCF matches the original BCF for all \( t \in [0, T] \), then the corresponding pseudomode dynamics exactly reproduces the reduced system density operator \( \hat{\rho}_{\text{S}}(t) \) up to time \( T \)~\cite{TamascelliSmirneHuelgaetal2018}.

The dynamics of the density operator (denoted as $\hat\rho_{\text{SA}}^{\text c}(t)$) in the coupled Lindblad pseudomode is:
\begin{equation}
\begin{aligned}
    &\frac{\mathrm d}{\mathrm dt} \hat\rho_{\text{SA}}^{\text{c}} =  -\mathrm i[\hat H_{\text S}+\hat H_{\text A}+\hat H_{\text{SA}},\hat\rho_{\text{SA}}^{\text{c}}] + \boldsymbol{D}_{\text A}(\hat\rho_{\text{SA}}^{\text{c}}),\quad 
    \\&\hat H_{\text A} = \sum_{k,l = 1}^N H_{kl} \hat b_k^\dagger \hat b_l,\quad 
    \hat H_{\text{SA}} = \hat S \hat{A},
    \\  
    &\boldsymbol{D}_{\text A}(\bullet)  = \sum_{k,l=1}^N \Gamma_{kl} \left(2 \hat b_l \bullet \hat b_k^\dagger  - \left\{\hat b_k^\dagger \hat b_l, \bullet\right\} \right),
\end{aligned}
\label{eq:coupled_Lindblad}
\end{equation}
with the bath initially in the vacuum state, i.e., $\hat{\rho}_{\text A}(0) = \proj{\boldsymbol{0}}$. Here, $H = H^\dagger$, $\Gamma\succeq 0$, and $\hat{A} = \sum_k g_k \hat b_k^\dagger  + \overline{g_k} \hat b_k $, where we denote a positive semidefinite (definite) matrix $M$ as $M \succeq 0$ ($M\succ 0$). The conditions on $H$ and $\Gamma$ ensure that the dynamics of $\hat{\rho}_\text{SA}^\text{c}$ are CPTP, i.e. physical. The term \emph{coupled} modes indicates that both $H$ and $\Gamma$ can be dense matrices; the off-diagonal elements mediate couplings between modes. In contrast, the Lorentzian pseudomode assumes $H$ and $\Gamma$ are diagonal, and the modes are \emph{decoupled}. The BCF of the coupled Lindblad pseudomode, $C^{\text{c}}(t) = \operatorname{tr}\left(\hat{A}(t) \hat{A}(0)  \hat{\rho}_{\text A}(0) \right) $, is given by~\cite{HuangLinParketal2024,LednevGarciaFeist2024}: 
\begin{equation}
    C^{\text{c}}(t) =  g^\dagger \mathrm e^{-\mathrm i Kt} g,\quad K  = H - \mathrm i \Gamma.\label{eq:matching_Lindblad} 
\end{equation}

Another pseudomode theory in comparison is the quasi-Lindblad theory~\cite{ParkHuangZhuetal2024}. It includes additional system-bath dissipation to Eq.~\eqref{eq:coupled_Lindblad},
\begin{equation}
    \boldsymbol{D}_{\text{SA}}(\bullet) = \hat{L}^q \bullet \hat{S} + \hat{S} \bullet \hat{L}^{q\dagger} - \frac{1}{2} \{ \hat{S} (\hat{L}^q + \hat{L}^{q\dagger}) , \bullet  \},
\end{equation}
where $\hat{L}^q = \sum_k 2\alpha_k \hat{b}_k$. With $l_k, r_k = g_k \pm \rm{i} \alpha_k$ and assuming diagonal $H$ and $\Gamma$ ($H_{kk} = \omega_k, \Gamma_{kk}=\gamma_k$), the corresponding BCF is given by \cite{ParkHuangZhuetal2024}:
\begin{equation}
    C^{\text q}(t) =\sum_{k=1}^N \overline{l_k} r_k \mathrm e^{(-\mathrm i\omega_k-\gamma_k)t}
    = l^\dagger \mathrm e^{-\mathrm i \Lambda t} r,
    \label{eq:matching_quasi}
\end{equation}
where $\Lambda = \text{diag}(\omega_k - \mathrm i \gamma_k)$ is  diagonal.  The Lorentzian pseudomode is recovered by setting $l_k=r_k$ (namely, $\alpha_k=0$) so that each exponential term in  Eq.~\eqref{eq:matching_quasi} has positive weights, but in general, the weights $\overline{l_k} r_k$ are complex-valued.  Several algorithms \cite{ParkHuangZhuetal2024, ThoennissVilkoviskiyAbanin2025, XuYanShietal2022, Takahashi2024, ZhangErpenbeckAndreetal2025} could be used to accurately fit the BCF in the form of Eq.~\eqref{eq:matching_quasi} with complex weights. Under certain analyticity assumptions, the number of modes scales as $\mathrm{polylog}(T/\varepsilon)$~\cite{ThoennissVilkoviskiyAbanin2025}, which is significantly more efficient than the unitary and Lorentzian pseudomode approaches~\cite{TrivediMalzCirac2021}.

The system-bath dissipation $\boldsymbol{D}_{\text{SA}}$ in general breaks the CP condition  when there is no dissipation acting on the system. Violating the CP condition can induce instabilities in the quasi-Lindblad dynamics, posing challenges for classical simulation~\footnote{There is a subtle concept called the Hamiltonian-induced stability~\cite{ParkHuangZhuetal2024}, which may stabilize the quasi-Lindblad dynamics, even under the CP violation, but in the worst case, the quasi-Lindblad dynamics show the instability.}. We note that the hierarchical equations of motion approach~\cite{Tanimura1989, XuYanShietal2022} may also encounter similar stability challenges~\cite{DunnTempelaarReichman2019, Yan2020heomstability, KrugStockburger2023}. 

In the context of quantum simulation, \rev{violating} the CP condition \rev{implies} that the dynamics can no longer be efficiently \footnote{To encode the evolution using a quantum channel, an exponentially large subnormalization factor in $T$ must be introduced, rendering long-time simulation on a quantum computer prohibitively expensive.}  implemented as a quantum channel. \rev{In contrast,} the coupled Lindblad dynamics are inherently \rev{CPTP and thus compatible} with quantum hardware. The coupling between bosonic modes can also be realized in analog ion-trap-based quantum simulators~\cite{Lemmer2018, Chen2023, Katz2023,Kang2024}.

\vspace{1em}
\emph{How do coupled modes achieve efficiency and physicality?--} 
We first explain the physical intuition for the improved efficiency of coupled modes in \cref{thm:main}. While the coupled modes offer greater expressive power via $\Or(N^2)$ parameters compared to the $\Or(N)$ parameters of the decoupled pseudomode, it is not clear \emph{a priori} whether it translates into a scaling advantage. Furthermore, the higher dimensionality of the parameter space makes the search for optimal parameters more computationally demanding.

We resolve such difficulties by establishing a connection between the coupled Lindblad and quasi-Lindblad pseudomode theory, the latter of which is known to achieve the $\mathrm{polylog}(T/\varepsilon)$ scaling. This connection is characterized by the feasibility condition in Eq.~\eqref{eq:feasibility} of \cref{thm:main}, obtained from the Hermiticity and complete positivity of the pseudomode dynamics. The feasibility condition also naturally leads to a robust numerical algorithm for constructing the pseudomodes using semidefinite programming (Eq.~\eqref{eq:feasibility_sdp}).

\vspace{1em}
\emph{Theoretical analysis for the $\mathrm{polylog}  (T/\varepsilon)$ scaling and robust construction algorithms.--}

We first discuss the connection between the coupled Lindblad BCF $C^{\text c}(t) = g^\dagger \mathrm e^{-\mathrm iKt}g $  and the quasi-Lindblad BCF $C^{\text q}(t) = l^\dagger\mathrm e^{-\mathrm i\Lambda t}r$.
\rev{This is achieved by introducing a gauge transformation that generates couplings between different modes, while enforcing Hermiticity and positivity.}
The result is summarized in the following theorem. 

\begin{theorem}
Let $\hat{\rho}_{\text{S}}^{\text{c}}(t)$ and $\hat{\rho}_{\text{S}}^{\text{q}}(t)$ denote the reduced system density operators obtained from the coupled Lindblad and quasi-Lindblad theory, respectively. If the BCF coincide, then the reduced dynamics are identical:
\begin{equation}
    \begin{aligned}
      & C(t)  = C^{\text{c}}(t) = C^{\text{q}}(t) \Rightarrow \hat\rho_{\text{S}}(t) = \hat\rho_{\text{S}}^{\text{c}}(t) = \hat\rho_{\text{S}}^{\text{q}}(t).
    \end{aligned}
\end{equation}
Furthermore, if the following feasibility condition holds,

\rev{\begin{equation}
\begin{aligned}
\exists\, Y \succ 0 \ \text{s.t.} \quad
& Y r = l , \qquad && \text{(Hermiticity)} \\
& \mathrm{i}\!\left( Y \Lambda - \Lambda^{\dagger} Y \right) \succeq 0 ,
\qquad && \text{(Positivity)}
\end{aligned}
\label{eq:feasibility}
\end{equation}}
then there exists a coupled Lindblad BCF $C^{\text{c}}(t)$, with the same number of modes $N$ as the quasi-Lindblad pseudomode with its BCF $C^{\text{q}}(t)$, such that $C^{\text{c}}(t) = C^{\text{q}}(t)$.
\label{thm:main}
\end{theorem}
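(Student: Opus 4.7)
The plan is to handle the two implications separately. For the first, I would invoke the standard fact that for a Gaussian bath linearly coupled to the system, the reduced dynamics depend only on the two-point bath correlation function. This is the central observation behind the pseudomode construction in~\cite{TamascelliSmirneHuelgaetal2018} (covering both the original bath and the coupled Lindblad case), and the analogous statement for the quasi-Lindblad setting is already carried out in~\cite{ParkHuangZhuetal2024}. Since both pseudomode baths are quadratic in the bosonic modes with a vacuum initial state, their induced influence functionals see only $C^{\text{c}}$ and $C^{\text{q}}$, respectively; if both coincide with $C(t)$, then $\hat\rho_{\text{S}}(t) = \hat\rho^{\text{c}}_{\text{S}}(t) = \hat\rho^{\text{q}}_{\text{S}}(t)$ follows immediately.

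For the second implication I would construct the coupled-Lindblad parameters explicitly by a similarity transformation driven by $Y$. Let $Y^{1/2}$ denote the unique Hermitian positive definite square root of $Y \succ 0$, and set
\begin{equation*}
    K = Y^{1/2}\Lambda\, Y^{-1/2},\qquad g = Y^{1/2} r.
\end{equation*}
Since conjugation commutes with the matrix exponential, $\mathrm e^{-\mathrm iKt} = Y^{1/2}\mathrm e^{-\mathrm i\Lambda t}\, Y^{-1/2}$; combining $Y^\dagger = Y$ with the vector constraint $Yr=l$ from~\eqref{eq:feasibility} yields
\begin{equation*}
    g^\dagger \mathrm e^{-\mathrm iKt}\, g = r^\dagger Y\, \mathrm e^{-\mathrm i\Lambda t}\, r = l^\dagger \mathrm e^{-\mathrm i\Lambda t}\, r = C^{\text{q}}(t),
\end{equation*}
matching the coupled-Lindblad form~\eqref{eq:matching_Lindblad} with the same $N$ modes.

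It remains to check that the decomposition $K = H - \mathrm i\Gamma$ has Hermitian $H$ and $\Gamma\succeq 0$. Taking $H = (K+K^\dagger)/2$ and $\Gamma = \mathrm i(K-K^\dagger)/2$ produces Hermitian matrices by construction; for the dissipative part a short calculation gives
\begin{equation*}
    \Gamma = \tfrac{1}{2}\, Y^{-1/2}\bigl[\mathrm i(Y\Lambda - \Lambda^\dagger Y)\bigr] Y^{-1/2},
\end{equation*}
and the matrix inequality in~\eqref{eq:feasibility} guarantees that the bracketed term is positive semidefinite. Congruence by the Hermitian matrix $Y^{-1/2}$ preserves positivity, so $\Gamma \succeq 0$ and the constructed $(H,\Gamma,g)$ is a valid coupled-Lindblad realization.

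The main obstacle is locating the correct similarity ansatz. Once one observes that the vector constraint $Yr=l$ rewrites the asymmetric quadratic form $l^\dagger \mathrm e^{-\mathrm i\Lambda t} r$ as the Hermitian sandwich $r^\dagger Y\, \mathrm e^{-\mathrm i\Lambda t}\, r$, and that the matrix condition $\mathrm i(Y\Lambda - \Lambda^\dagger Y) \succeq 0$ is precisely the dissipative positivity required for $\Gamma$, the rest of the argument is routine algebra. It is also worth noting that \eqref{eq:feasibility} is a linear matrix inequality in $Y$, i.e.\ a convex semidefinite feasibility problem, which sets up the convex reformulation behind the numerical algorithm introduced immediately after the theorem.
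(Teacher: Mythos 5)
Your proposal is correct and follows essentially the same route as the paper: the first part defers to the result of Ref.~\cite{TamascelliSmirneHuelgaetal2018} that the BCF determines the reduced dynamics, and the second part is the paper's gauge-transformation argument $\Lambda\to X\Lambda X^{-1}$ with the specific (and natural) choice $X=Y^{1/2}$, which is exactly what the paper uses when it later sets $X=\sqrt{Y}$. Your explicit computation $\Gamma=\tfrac{1}{2}Y^{-1/2}\bigl[\mathrm i(Y\Lambda-\Lambda^\dagger Y)\bigr]Y^{-1/2}$ makes the sufficiency direction slightly more transparent than the paper's phrasing, but it is the same argument.
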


\begin{proof}
The first part of the theorem parallels the result of Ref.~\cite{TamascelliSmirneHuelgaetal2018}, which establishes that, for fixed $\hat{H}_{\text{S}}$, $\hat{S}$, and $\hat{\rho}_{\text{S}}(0)$, the BCF uniquely determines the reduced system dynamics.
 Therefore, we focus on the second part. A gauge transformation, $\Lambda\rightarrow K = X\Lambda X^{-1}$, $l^\dagger\rightarrow l^\dagger X^{-1} $, $r\rightarrow Xr$, with an invertible matrix $X$, leaves $C^{\text q}(t)$ invariant. The gauge-transformed BCF takes the coupled Lindblad form if the following conditions hold: (a) $g = (l^\dagger X^{-1})^\dagger =Xr$, (b) $\Gamma= (K^\dagger-K)/2\mathrm i\succeq 0$. 
These two conditions correspond to the Hermiticity and the positivity of the dynamics, respectively.
Introducing $Y = X^\dagger X$, we rewrite these as the equality and inequality constraints in \cref{eq:feasibility} by multiplying $X$ and $X^{\dag}$ to conditions (a) and (b). In addition, $X$ being invertible indicates that $Y \succ 0$, which makes conditions (a) and (b) equivalent to \cref{eq:feasibility}.
\end{proof}

\cref{thm:main} implies that when the feasibility condition is satisfied, the number of coupled Lindblad pseudomodes is \emph{no greater than} that of quasi-Lindblad pseudomodes. Consequently, the $\mathrm{polylog}(T/\varepsilon)$ scaling of the latter extends directly to the coupled Lindblad pseudomode.

In practice, the feasibility condition may not hold exactly. Nonetheless, we construct a numerical procedure that minimally violates it by solving the following least-squares problem with semidefinite constraints:
\begin{equation}
    \begin{aligned}
        & \underset{Y\succ 0}{\text{min}}\;
        \|l - Y r \|_2^2,\quad \text{ subject to }\quad  \mathrm i (Y \Lambda - \Lambda^\dagger Y)\succeq 0,
    \end{aligned}
    \label{eq:feasibility_sdp}
\end{equation}
This problem could be solved efficiently via a semidefinite programming (SDP) solver. Setting $X = \sqrt{Y}$, we recover the parameters $g$, $H$, and $\Gamma$ in Eq.~\eqref{eq:matching_Lindblad}.  This approach avoids the non-convex optimization used in prior works~\cite{LednevGarciaFeist2024, MascherpaSmirneSomozaEtAl2020, Dorda2014, Dorda2015}. We also propose a new procedure for obtaining coupled Lindblad parameters directly from the BCF in the frequency domain, motivated by the \emph{realization problem} in control theory~\cite{Antoulas1Anderson1986,MayoAntoulas2007},  as explained in the Supplemental Material (SM) \cite[Section S1]{SuppInfo}.
\vspace{1em}
\emph{General applicability.}-- The above discussion focuses on the single-site spin-boson model. Our coupled Lindblad pseudomode theory can be directly extended to multi-site systems as well as to fermionic environments. We refer to the SM \cite{SuppInfo}, Section S2 and Section S3, for details.
\vspace{1em}

\begin{figure}[ht]
  \centering
  \includegraphics[width=\linewidth]{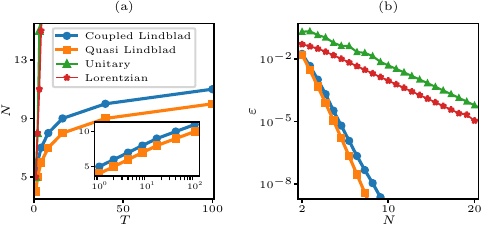}
  \caption{(a) For a fixed precision $\varepsilon = 10^{-6}$ in fitting $C(t)$ for $t \in [0, T]$, we plot the number of modes required, $N$, against the maximum simulation time $T$. The number of coupled Lindblad and quasi-Lindblad pseudomodes scales as $\Or(\log T)$ in contrast to the $\Or(T)$ scaling in the unitary and Lorentzian modes. 
(b) For a fixed $T = 10$, we plot $\varepsilon$ versus $N$, where the coupled Lindblad and quasi-Lindblad methods  achieve a significantly faster convergence rate.
}
\label{fig:comparison}
\end{figure}

\emph{Numerical results.--}
\rev{We begin by approximating the BCF $C(t)$ with coupled Lindblad modes to demonstrate their $\mathrm{polylog}(T/\varepsilon)$ scaling in comparison with other pseudomode approaches.} Our target is to fit $C(t)$ for $t \in [0, T]$, derived from the Ohmic spectral density $J(\omega) = \omega\mathrm e^{-\omega / \omega_c}$ for $\omega\geq 0$ at zero temperature and $\omega_c=1$. For the coupled Lindblad fitting, we first fit $\widetilde{C}(\omega)$ in the frequency domain using the realization-based method, and further refine the result by using it as an initial guess for a gradient-based optimization of $C(t)$ in the time domain. We illustrate $T$-dependence (Fig.~\ref{fig:comparison}(a)) and $\varepsilon$-dependence (Fig.~\ref{fig:comparison}(b)) of the number of modes $N$, where we use the averaged $L^2$ error $\varepsilon$ \cite[Eq. (S9)]{SuppInfo} as a measure of precision. \rev{See SM \cite[Section S6]{SuppInfo} for more details on the numerical setup.} 
The results of Fig.~\ref{fig:comparison}(a) with a fixed precision $\varepsilon=10^{-6}$ confirm that the coupled Lindblad pseudomode exhibits $N = \Or(\log T)$ scaling, similar to the quasi-Lindblad pseudomode~\cite{ThoennissVilkoviskiyAbanin2025}, in contrast to $N = \Or(T)$ scaling in the unitary and Lorentzian modes. In Fig.~\ref{fig:comparison}(b), with a fixed $T=10$, all methods exhibit $N= \Or(\log(1/\varepsilon))$ scaling, but the coupled Lindblad and quasi-Lindblad approaches converge significantly faster. The performance of the coupled Lindblad pseudomode closely matches that of the quasi-Lindblad pseudomode, indicating that the violation of the feasibility condition is small. \rev{Furthermore,
we confirm that the $\mathrm{polylog}(T/\varepsilon)$ efficiency of the coupled pseudomodes remains valid across additional numerical examples, specifically considering the BCF from sub-Ohmic and semicircular spectral densities (see SM, \cite[ Fig. S1 and Fig. S3]{SuppInfo}).} 

\begin{figure}[ht]
  \centering
  \includegraphics{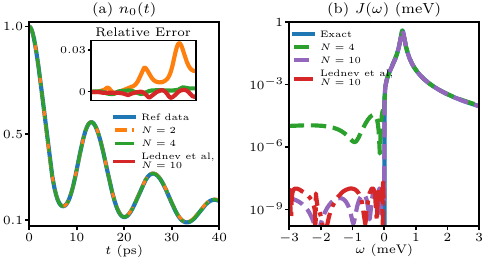}
  \caption{(a) Population $n_0(t)$ and its relative error for the spin-boson model dynamics. (b) Spectral density $J(\omega)$, and its fitting using coupled modes $N=4$ and $N=10$. Both plots are compared with results for $N=10$ extracted from Ref.~\cite{LednevGarciaFeist2024}. 
  }
  \label{fig:dynamics}
\end{figure}

\rev{Next, we demonstrate that the coupled Lindblad pseudomode theory can accurately capture real-time system dynamics.} We study the spin-boson model with $\hat{H}_{\text{S}}= \frac{\omega_e}{2} \hat{\sigma}_z $ and $\hat{S} = \hat{\sigma}_x$, where $\hat{\sigma}_{x/z}$ are Pauli operators. We follow the setup in Ref.~\cite{LednevGarciaFeist2024} by choosing the Lorentzian-like spectral density, $J(\omega)=\frac{2g^2 \kappa \omega_c \omega / \pi}{(\omega_c^2 - \omega^2)^2 + \kappa^2 \omega^2}$ for $\omega\geq 0$  and $J(\omega) = 0$ for $\omega < 0$. We focus on the ultra-strong coupling regime, with parameters $\omega_c = \omega_e = 0.58, \ g=0.25,$ and $\kappa = 0.1 \text{meV}$ \cite{LednevGarciaFeist2024}. \cref{fig:dynamics}(a) describes population dynamics $n_0(t)= \langle 0 |\hat{\rho}_S(t)|0 \rangle$ evolved from the initial state $\hat{\rho}_S(0)=|0\rangle\! \langle 0|$. We use $N=2$ and $N=4$ coupled pseudomodes, compared to the reference data obtained from unitary dynamics with a large discretization $N=400$. \rev{We use a time-dependent density matrix renormalization group algorithm for numerical simulations (see SM \cite[Section S4]{SuppInfo} for details).} 
Our approach achieves accuracy comparable to the coupled pseudomode dynamics in Ref.~\cite{LednevGarciaFeist2024} with only $N=4$ modes, versus their $N=10$. Notably, while Ref.~\cite{LednevGarciaFeist2024} requires non-convex optimization and explicit penalties to suppress unphysical negative-frequency contributions, our construction inherently minimizes these contributions. In Fig.~\ref{fig:dynamics}(b), the fitted BCF yields a minimal negative-frequency contribution of approximately $10^{-5}$ for $N=4$, which further vanishes to $10^{-9}$ for $N=10$. We also tested the method for the fermionic Anderson impurity model (see SM \cite[Section S5]{SuppInfo}), demonstrating its applicability to fermionic environments.

\begin{figure}[ht]
  \centering
  \includegraphics{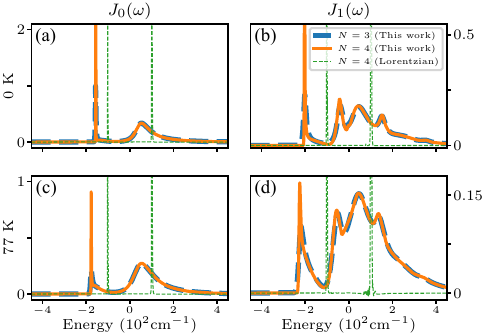}
  \caption{Normalized absorption spectrum $S(\omega)$ for the dimer model with two different environments  $J_0(\omega)$ \rev{(a, c)} and $J_1(\omega)$ \rev{(b, d)}, at zero \rev{(a, b)} and finite temperature (77K, \rev{(c, d)}). 
}\label{fig:absorption_spec}
\end{figure}

Finally, we compute the absorption spectra using the coupled Lindblad pseudomode for a dimer model with three states, $\ket{g}, \ket{\epsilon_1},$ and $\ket{\epsilon_2}$, and $\hat{H}_{\text{S}} = \sum_{i=1}^2 \epsilon_i \ket{\epsilon_i}\!\bra{\epsilon_i} + J (\ket{\epsilon_1}\!\bra{\epsilon_2}+ \ket{\epsilon_2}\!\bra{\epsilon_1}) $. The environment is coupled independently to each excited state via the operators, $\hat{S}_i = \ket{\epsilon_i}\!\bra{\epsilon_i}$, using the parameters of Ref.~\cite{MascherpaSmirneSomozaEtAl2020}. We consider two different spectral densities: $J_0(\omega)$, a broad spectrum, and $J_1(\omega)$, an additional sharp peak over $J_0(\omega)$ \rev{(details in SM \cite[Section S7]{SuppInfo})}. We compute the absorption spectrum, $S(\omega) =\omega \operatorname{Im}\left(\int_0^{\infty}\mathrm i C_{\hat{\mu}}(t) \mathrm e^{\mathrm i\omega t} \mathrm dt\right)$, derived from the dipole-dipole correlation function $C_{\hat{\mu}}(t)$ with $\hat{\mu} = \sum_i \ket{\epsilon_i}\!\bra{g} + \ket{g}\!\bra{\epsilon_i} $ and an initial state $\hat{\rho}_{\text{S}}(0) = \ket{g}\!\bra{g}$. The result is shown in Fig.~\ref{fig:absorption_spec} compared to results from the Lorentzian pseudomode. Notably, the absorption spectrum exhibits a sharp peak in the negative frequency region at zero temperature, which becomes narrower and sharper as $N$ increases, highlighting the convergence behavior of our approach. Accurately capturing the presence of sharp resonances and broad features in the spectrum requires a faithful reconstruction of the spectral density, which our method achieves effectively. In contrast, the Lorentzian pseudomode fails to reproduce the broad component and misplaces several spectral peaks. 

\vspace{1em}
\emph{Conclusions and outlook.--} We argue that the coupled Lindblad pseudomode framework possesses all the desirable features of a well-designed pseudomode theory.
 The dynamics can be realized as a quantum channel, making them inherently stable. In both theory and practice, only a small number of pseudomodes is needed to achieve accurate results with $\text{polylog}(T/\varepsilon)$ scaling. Moreover, the pseudomodes can be constructed via a robust algorithm.
We illustrate the method using the spin-boson model as an example, but it is equally applicable to fermionic environments. We anticipate that with the improved understanding and robustness provided by the techniques developed in this work, the coupled Lindblad framework will prove broadly useful across a wide range of open quantum system applications. These include the simulation of ultra-strong coupling regimes~\cite{RMP_FornDiaz2019}, the use of pseudomodes as an impurity solver in dynamical mean-field theory~\cite{KotliarSavrasovHauleetal2006, AokiTsujiEcksteinetal2014, BertrandBesserveFerreroetal2024}, and the simulation of condensed-phase chemical dynamics on quantum platforms such as trapped-ion devices~\cite{Lemmer2018, Kang2024}.

\vspace{1em}
\emph{Acknowledgements.--} This material is based upon work supported by the U.S. Department of Energy, Office of Science, Accelerated Research in Quantum Computing Centers, Quantum Utility through Advanced Computational Quantum Algorithms, grant no. DE-SC0025572 (Z.H., G.K.C., L.L.). Additional support is acknowledged from the Simons Targeted Grant in Mathematics and Physical Sciences on Moir\'e Materials Magic (Z.H., L.L.) and the U.S. Department of Energy, Office of Science, Office of Advanced Scientific Computing Research and Office of Basic Energy Sciences, Scientific Discovery through Advanced Computing (SciDAC) program under Award Number DE-SC0022088 (G.P., G.K.C.). G.P. acknowledges support from the Eddleman Quantum Graduate Fellowship at Caltech. G.K.C. and L.L. are Simons Investigators. Additional support for G.K.C. in the early phase of this project was provided by the U.S. Department of Energy, Office of Science, Basic Energy Sciences, via Award Number DE-SC0019374. This research used the Savio computational cluster resource provided by the Berkeley Research Computing program at the University of California, Berkeley. We thank helpful discussions with Zhiyan Ding, Mingyu Kang, David Limmer, Olivier Parcollet, Miles Stoudenmire, Steve White, Jason Kaye, and Yuanran Zhu. 

\bibliographystyle{apsrev4-2}
\bibliography{ref}

\begin{thebibliography}{65}%
\makeatletter
\providecommand \@ifxundefined [1]{%
 \@ifx{#1\undefined}
}%
\providecommand \@ifnum [1]{%
 \ifnum #1\expandafter \@firstoftwo
 \else \expandafter \@secondoftwo
 \fi
}%
\providecommand \@ifx [1]{%
 \ifx #1\expandafter \@firstoftwo
 \else \expandafter \@secondoftwo
 \fi
}%
\providecommand \natexlab [1]{#1}%
\providecommand \enquote  [1]{``#1''}%
\providecommand \bibnamefont  [1]{#1}%
\providecommand \bibfnamefont [1]{#1}%
\providecommand \citenamefont [1]{#1}%
\providecommand \href@noop [0]{\@secondoftwo}%
\providecommand \href [0]{\begingroup \@sanitize@url \@href}%
\providecommand \@href[1]{\@@startlink{#1}\@@href}%
\providecommand \@@href[1]{\endgroup#1\@@endlink}%
\providecommand \@sanitize@url [0]{\catcode `\\12\catcode `\$12\catcode `\&12\catcode `\#12\catcode `\^12\catcode `\_12\catcode `\%12\relax}%
\providecommand \@@startlink[1]{}%
\providecommand \@@endlink[0]{}%
\providecommand \url  [0]{\begingroup\@sanitize@url \@url }%
\providecommand \@url [1]{\endgroup\@href {#1}{\urlprefix }}%
\providecommand \urlprefix  [0]{URL }%
\providecommand \Eprint [0]{\href }%
\providecommand \doibase [0]{https://doi.org/}%
\providecommand \selectlanguage [0]{\@gobble}%
\providecommand \bibinfo  [0]{\@secondoftwo}%
\providecommand \bibfield  [0]{\@secondoftwo}%
\providecommand \translation [1]{[#1]}%
\providecommand \BibitemOpen [0]{}%
\providecommand \bibitemStop [0]{}%
\providecommand \bibitemNoStop [0]{.\EOS\space}%
\providecommand \EOS [0]{\spacefactor3000\relax}%
\providecommand \BibitemShut  [1]{\csname bibitem#1\endcsname}%
\let\auto@bib@innerbib\@empty
\bibitem [{\citenamefont {Rivas}\ and\ \citenamefont {Huelga}(2012)}]{RivasHuelga2012}%
  \BibitemOpen
  \bibfield  {author} {\bibinfo {author} {\bibfnamefont {A.}~\bibnamefont {Rivas}}\ and\ \bibinfo {author} {\bibfnamefont {S.~F.}\ \bibnamefont {Huelga}},\ }\href {https://doi.org/10.1007/978-3-642-23354-8} {\emph {\bibinfo {title} {Open quantum systems}}},\ Vol.~\bibinfo {volume} {10}\ (\bibinfo  {publisher} {Springer},\ \bibinfo {year} {2012})\BibitemShut {NoStop}%
\bibitem [{\citenamefont {Breuer}\ and\ \citenamefont {Petruccione}(2002)}]{BreuerPetruccione2002}%
  \BibitemOpen
  \bibfield  {author} {\bibinfo {author} {\bibfnamefont {H.-P.}\ \bibnamefont {Breuer}}\ and\ \bibinfo {author} {\bibfnamefont {F.}~\bibnamefont {Petruccione}},\ }\href {https://doi.org/10.1093/acprof:oso/9780199213900.001.0001} {\emph {\bibinfo {title} {The theory of open quantum systems}}}\ (\bibinfo  {publisher} {Oxford University Press},\ \bibinfo {year} {2002})\BibitemShut {NoStop}%
\bibitem [{\citenamefont {Berkelbach}\ and\ \citenamefont {Thoss}(2020)}]{BerkelbachThoss2020}%
  \BibitemOpen
  \bibfield  {author} {\bibinfo {author} {\bibfnamefont {T.~C.}\ \bibnamefont {Berkelbach}}\ and\ \bibinfo {author} {\bibfnamefont {M.}~\bibnamefont {Thoss}},\ }\href {https://doi.org/10.1063/1.5142731} {\bibfield  {journal} {\bibinfo  {journal} {J. Chem. Phys.}\ }\textbf {\bibinfo {volume} {152}},\ \bibinfo {pages} {020401} (\bibinfo {year} {2020})}\BibitemShut {NoStop}%
\bibitem [{\citenamefont {Leggett}\ \emph {et~al.}(1987)\citenamefont {Leggett}, \citenamefont {Chakravarty}, \citenamefont {Dorsey}, \citenamefont {Fisher}, \citenamefont {Garg},\ and\ \citenamefont {Zwerger}}]{LeggettChakravartyDorseyetal1987}%
  \BibitemOpen
  \bibfield  {author} {\bibinfo {author} {\bibfnamefont {A.~J.}\ \bibnamefont {Leggett}}, \bibinfo {author} {\bibfnamefont {S.}~\bibnamefont {Chakravarty}}, \bibinfo {author} {\bibfnamefont {A.~T.}\ \bibnamefont {Dorsey}}, \bibinfo {author} {\bibfnamefont {M.~P.}\ \bibnamefont {Fisher}}, \bibinfo {author} {\bibfnamefont {A.}~\bibnamefont {Garg}},\ and\ \bibinfo {author} {\bibfnamefont {W.}~\bibnamefont {Zwerger}},\ }\href {https://doi.org/10.1103/RevModPhys.59.1} {\bibfield  {journal} {\bibinfo  {journal} {Rev. Mod. Phys.}\ }\textbf {\bibinfo {volume} {59}},\ \bibinfo {pages} {1} (\bibinfo {year} {1987})}\BibitemShut {NoStop}%
\bibitem [{\citenamefont {Tamascelli}\ \emph {et~al.}(2018)\citenamefont {Tamascelli}, \citenamefont {Smirne}, \citenamefont {Huelga},\ and\ \citenamefont {Plenio}}]{TamascelliSmirneHuelgaetal2018}%
  \BibitemOpen
  \bibfield  {author} {\bibinfo {author} {\bibfnamefont {D.}~\bibnamefont {Tamascelli}}, \bibinfo {author} {\bibfnamefont {A.}~\bibnamefont {Smirne}}, \bibinfo {author} {\bibfnamefont {S.~F.}\ \bibnamefont {Huelga}},\ and\ \bibinfo {author} {\bibfnamefont {M.~B.}\ \bibnamefont {Plenio}},\ }\href {https://doi.org/10.1103/PhysRevLett.120.030402} {\bibfield  {journal} {\bibinfo  {journal} {Phys. Rev. Lett.}\ }\textbf {\bibinfo {volume} {120}},\ \bibinfo {pages} {030402} (\bibinfo {year} {2018})}\BibitemShut {NoStop}%
\bibitem [{\citenamefont {Mascherpa}\ \emph {et~al.}(2017)\citenamefont {Mascherpa}, \citenamefont {Smirne}, \citenamefont {Huelga},\ and\ \citenamefont {Plenio}}]{MascherpaSmirneHuelgaetal2017}%
  \BibitemOpen
  \bibfield  {author} {\bibinfo {author} {\bibfnamefont {F.}~\bibnamefont {Mascherpa}}, \bibinfo {author} {\bibfnamefont {A.}~\bibnamefont {Smirne}}, \bibinfo {author} {\bibfnamefont {S.~F.}\ \bibnamefont {Huelga}},\ and\ \bibinfo {author} {\bibfnamefont {M.~B.}\ \bibnamefont {Plenio}},\ }\href {https://doi.org/10.1103/PhysRevLett.118.100401} {\bibfield  {journal} {\bibinfo  {journal} {Phys. Rev. Lett.}\ }\textbf {\bibinfo {volume} {118}},\ \bibinfo {pages} {100401} (\bibinfo {year} {2017})}\BibitemShut {NoStop}%
\bibitem [{\citenamefont {Vilkoviskiy}\ and\ \citenamefont {Abanin}(2024)}]{vilkoviskiy2023bound}%
  \BibitemOpen
  \bibfield  {author} {\bibinfo {author} {\bibfnamefont {I.}~\bibnamefont {Vilkoviskiy}}\ and\ \bibinfo {author} {\bibfnamefont {D.~A.}\ \bibnamefont {Abanin}},\ }\href {https://doi.org/10.1103/PhysRevB.109.205126} {\bibfield  {journal} {\bibinfo  {journal} {Phys. Rev. B}\ }\textbf {\bibinfo {volume} {109}},\ \bibinfo {pages} {205126} (\bibinfo {year} {2024})}\BibitemShut {NoStop}%
\bibitem [{\citenamefont {Liu}\ and\ \citenamefont {Lu}(2025)}]{LiuLu2025}%
  \BibitemOpen
  \bibfield  {author} {\bibinfo {author} {\bibfnamefont {K.}~\bibnamefont {Liu}}\ and\ \bibinfo {author} {\bibfnamefont {J.}~\bibnamefont {Lu}},\ }\href {https://doi.org/10.22331/q-2025-10-28-1896} {\bibfield  {journal} {\bibinfo  {journal} {{Quantum}}\ }\textbf {\bibinfo {volume} {9}},\ \bibinfo {pages} {1896} (\bibinfo {year} {2025})}\BibitemShut {NoStop}%
\bibitem [{\citenamefont {Huang}\ \emph {et~al.}(2024)\citenamefont {Huang}, \citenamefont {Zhu}, \citenamefont {Park},\ and\ \citenamefont {Lin}}]{HuangLinParketal2024}%
  \BibitemOpen
  \bibfield  {author} {\bibinfo {author} {\bibfnamefont {Z.}~\bibnamefont {Huang}}, \bibinfo {author} {\bibfnamefont {Y.}~\bibnamefont {Zhu}}, \bibinfo {author} {\bibfnamefont {G.}~\bibnamefont {Park}},\ and\ \bibinfo {author} {\bibfnamefont {L.}~\bibnamefont {Lin}},\ }\href {https://arxiv.org/abs/2411.08741} {\bibfield  {journal} {\bibinfo  {journal} {arXiv preprint arXiv:2411.08741}\ } (\bibinfo {year} {2024})}\BibitemShut {NoStop}%
\bibitem [{\citenamefont {Lindblad}(1976)}]{Lindblad1976}%
  \BibitemOpen
  \bibfield  {author} {\bibinfo {author} {\bibfnamefont {G.}~\bibnamefont {Lindblad}},\ }\href {https://doi.org/10.1007/BF01608499} {\bibfield  {journal} {\bibinfo  {journal} {Commun. Math. Phys.}\ }\textbf {\bibinfo {volume} {48}},\ \bibinfo {pages} {119} (\bibinfo {year} {1976})}\BibitemShut {NoStop}%
\bibitem [{\citenamefont {Gorini}\ \emph {et~al.}(1976)\citenamefont {Gorini}, \citenamefont {Kossakowski},\ and\ \citenamefont {Sudarshan}}]{Gorini1976}%
  \BibitemOpen
  \bibfield  {author} {\bibinfo {author} {\bibfnamefont {V.}~\bibnamefont {Gorini}}, \bibinfo {author} {\bibfnamefont {A.}~\bibnamefont {Kossakowski}},\ and\ \bibinfo {author} {\bibfnamefont {E.~C.~G.}\ \bibnamefont {Sudarshan}},\ }\href {https://doi.org/10.1063/1.522979} {\bibfield  {journal} {\bibinfo  {journal} {J. Math. Phys.}\ }\textbf {\bibinfo {volume} {17}},\ \bibinfo {pages} {821} (\bibinfo {year} {1976})}\BibitemShut {NoStop}%
\bibitem [{\citenamefont {Cleve}\ and\ \citenamefont {Wang}(2017)}]{CW17}%
  \BibitemOpen
  \bibfield  {author} {\bibinfo {author} {\bibfnamefont {R.}~\bibnamefont {Cleve}}\ and\ \bibinfo {author} {\bibfnamefont {C.}~\bibnamefont {Wang}},\ }in\ \href {https://doi.org/10.4230/LIPIcs.ICALP.2017.17} {\emph {\bibinfo {booktitle} {ICALP 2017}}}\ (\bibinfo  {publisher} {Schloss Dagstuhl -- Leibniz-Zentrum f{\"u}r Informatik},\ \bibinfo {address} {Dagstuhl, Germany},\ \bibinfo {year} {2017})\BibitemShut {NoStop}%
\bibitem [{\citenamefont {Li}\ and\ \citenamefont {Wang}(2023)}]{LiWang2023}%
  \BibitemOpen
  \bibfield  {author} {\bibinfo {author} {\bibfnamefont {X.}~\bibnamefont {Li}}\ and\ \bibinfo {author} {\bibfnamefont {C.}~\bibnamefont {Wang}},\ }in\ \href {https://doi.org/10.4230/LIPIcs.ICALP.2023.87} {\emph {\bibinfo {booktitle} {ICALP 2023}}}\ (\bibinfo  {publisher} {Schloss Dagstuhl -- Leibniz-Zentrum f{\"u}r Informatik},\ \bibinfo {address} {Dagstuhl, Germany},\ \bibinfo {year} {2023})\ pp.\ \bibinfo {pages} {87:1--87:20}\BibitemShut {NoStop}%
\bibitem [{\citenamefont {Ding}\ \emph {et~al.}(2024)\citenamefont {Ding}, \citenamefont {Li},\ and\ \citenamefont {Lin}}]{DingLiLin2024}%
  \BibitemOpen
  \bibfield  {author} {\bibinfo {author} {\bibfnamefont {Z.}~\bibnamefont {Ding}}, \bibinfo {author} {\bibfnamefont {X.}~\bibnamefont {Li}},\ and\ \bibinfo {author} {\bibfnamefont {L.}~\bibnamefont {Lin}},\ }\href {https://doi.org/10.1103/PRXQuantum.5.020332} {\bibfield  {journal} {\bibinfo  {journal} {PRX Quantum}\ }\textbf {\bibinfo {volume} {5}},\ \bibinfo {pages} {020332} (\bibinfo {year} {2024})}\BibitemShut {NoStop}%
\bibitem [{\citenamefont {Lemmer}\ \emph {et~al.}(2018)\citenamefont {Lemmer}, \citenamefont {Cormick}, \citenamefont {Tamascelli}, \citenamefont {Schaetz}, \citenamefont {Huelga},\ and\ \citenamefont {Plenio}}]{Lemmer2018}%
  \BibitemOpen
  \bibfield  {author} {\bibinfo {author} {\bibfnamefont {A.}~\bibnamefont {Lemmer}}, \bibinfo {author} {\bibfnamefont {C.}~\bibnamefont {Cormick}}, \bibinfo {author} {\bibfnamefont {D.}~\bibnamefont {Tamascelli}}, \bibinfo {author} {\bibfnamefont {T.}~\bibnamefont {Schaetz}}, \bibinfo {author} {\bibfnamefont {S.~F.}\ \bibnamefont {Huelga}},\ and\ \bibinfo {author} {\bibfnamefont {M.~B.}\ \bibnamefont {Plenio}},\ }\href {https://doi.org/10.1088/1367-2630/aac87d} {\bibfield  {journal} {\bibinfo  {journal} {New J. Phys.}\ }\textbf {\bibinfo {volume} {20}},\ \bibinfo {pages} {073002} (\bibinfo {year} {2018})}\BibitemShut {NoStop}%
\bibitem [{\citenamefont {Kang}\ \emph {et~al.}(2024)\citenamefont {Kang}, \citenamefont {Nuomin}, \citenamefont {Chowdhury}, \citenamefont {Yuly}, \citenamefont {Sun}, \citenamefont {Whitlow}, \citenamefont {Valdiviezo}, \citenamefont {Zhang}, \citenamefont {Zhang}, \citenamefont {Beratan},\ and\ \citenamefont {Brown}}]{Kang2024}%
  \BibitemOpen
  \bibfield  {author} {\bibinfo {author} {\bibfnamefont {M.}~\bibnamefont {Kang}}, \bibinfo {author} {\bibfnamefont {H.}~\bibnamefont {Nuomin}}, \bibinfo {author} {\bibfnamefont {S.~N.}\ \bibnamefont {Chowdhury}}, \bibinfo {author} {\bibfnamefont {J.~L.}\ \bibnamefont {Yuly}}, \bibinfo {author} {\bibfnamefont {K.}~\bibnamefont {Sun}}, \bibinfo {author} {\bibfnamefont {J.}~\bibnamefont {Whitlow}}, \bibinfo {author} {\bibfnamefont {J.}~\bibnamefont {Valdiviezo}}, \bibinfo {author} {\bibfnamefont {Z.}~\bibnamefont {Zhang}}, \bibinfo {author} {\bibfnamefont {P.}~\bibnamefont {Zhang}}, \bibinfo {author} {\bibfnamefont {D.~N.}\ \bibnamefont {Beratan}},\ and\ \bibinfo {author} {\bibfnamefont {K.~R.}\ \bibnamefont {Brown}},\ }\href {https://doi.org/10.1038/s41570-024-00595-1} {\bibfield  {journal} {\bibinfo  {journal} {Nat. Rev. Chem.}\ }\textbf {\bibinfo {volume} {8}},\ \bibinfo {pages} {340} (\bibinfo {year} {2024})}\BibitemShut {NoStop}%
\bibitem [{\citenamefont {Bertrand}\ \emph {et~al.}(2025)\citenamefont {Bertrand}, \citenamefont {Besserve}, \citenamefont {Ferrero},\ and\ \citenamefont {Ayral}}]{BertrandBesserveFerreroetal2024}%
  \BibitemOpen
  \bibfield  {author} {\bibinfo {author} {\bibfnamefont {C.}~\bibnamefont {Bertrand}}, \bibinfo {author} {\bibfnamefont {P.}~\bibnamefont {Besserve}}, \bibinfo {author} {\bibfnamefont {M.}~\bibnamefont {Ferrero}},\ and\ \bibinfo {author} {\bibfnamefont {T.}~\bibnamefont {Ayral}},\ }\href {https://doi.org/10.1103/g1ky-4zd7} {\bibfield  {journal} {\bibinfo  {journal} {Phys. Rev. B}\ }\textbf {\bibinfo {volume} {111}},\ \bibinfo {pages} {245159} (\bibinfo {year} {2025})}\BibitemShut {NoStop}%
\bibitem [{\citenamefont {Prior}\ \emph {et~al.}(2010)\citenamefont {Prior}, \citenamefont {Chin}, \citenamefont {Huelga},\ and\ \citenamefont {Plenio}}]{Prior2010}%
  \BibitemOpen
  \bibfield  {author} {\bibinfo {author} {\bibfnamefont {J.}~\bibnamefont {Prior}}, \bibinfo {author} {\bibfnamefont {A.~W.}\ \bibnamefont {Chin}}, \bibinfo {author} {\bibfnamefont {S.~F.}\ \bibnamefont {Huelga}},\ and\ \bibinfo {author} {\bibfnamefont {M.~B.}\ \bibnamefont {Plenio}},\ }\href {https://doi.org/10.1103/PhysRevLett.105.050404} {\bibfield  {journal} {\bibinfo  {journal} {Phys. Rev. Lett.}\ }\textbf {\bibinfo {volume} {105}},\ \bibinfo {pages} {050404} (\bibinfo {year} {2010})}\BibitemShut {NoStop}%
\bibitem [{\citenamefont {de~Vega}\ \emph {et~al.}(2015)\citenamefont {de~Vega}, \citenamefont {Schollw\"ock},\ and\ \citenamefont {Wolf}}]{DeVegaInesSchollwock2015}%
  \BibitemOpen
  \bibfield  {author} {\bibinfo {author} {\bibfnamefont {I.}~\bibnamefont {de~Vega}}, \bibinfo {author} {\bibfnamefont {U.}~\bibnamefont {Schollw\"ock}},\ and\ \bibinfo {author} {\bibfnamefont {F.~A.}\ \bibnamefont {Wolf}},\ }\href {https://doi.org/10.1103/PhysRevB.92.155126} {\bibfield  {journal} {\bibinfo  {journal} {Phys. Rev. B}\ }\textbf {\bibinfo {volume} {92}},\ \bibinfo {pages} {155126} (\bibinfo {year} {2015})}\BibitemShut {NoStop}%
\bibitem [{\citenamefont {Woods}\ \emph {et~al.}(2015)\citenamefont {Woods}, \citenamefont {Cramer},\ and\ \citenamefont {Plenio}}]{WoodsCramerPlenio2015}%
  \BibitemOpen
  \bibfield  {author} {\bibinfo {author} {\bibfnamefont {M.~P.}\ \bibnamefont {Woods}}, \bibinfo {author} {\bibfnamefont {M.}~\bibnamefont {Cramer}},\ and\ \bibinfo {author} {\bibfnamefont {M.~B.}\ \bibnamefont {Plenio}},\ }\href {https://doi.org/https://doi.org/10.1103/PhysRevLett.115.130401} {\bibfield  {journal} {\bibinfo  {journal} {Phys. Rev. Lett.}\ }\textbf {\bibinfo {volume} {115}},\ \bibinfo {pages} {1} (\bibinfo {year} {2015})}\BibitemShut {NoStop}%
\bibitem [{\citenamefont {Woods}\ and\ \citenamefont {Plenio}(2016)}]{WoodsPlenio2016}%
  \BibitemOpen
  \bibfield  {author} {\bibinfo {author} {\bibfnamefont {M.~P.}\ \bibnamefont {Woods}}\ and\ \bibinfo {author} {\bibfnamefont {M.~B.}\ \bibnamefont {Plenio}},\ }\href {https://doi.org/10.1063/1.4940436} {\bibfield  {journal} {\bibinfo  {journal} {J. Math. Phys.}\ }\textbf {\bibinfo {volume} {57}},\ \bibinfo {pages} {1} (\bibinfo {year} {2016})}\BibitemShut {NoStop}%
\bibitem [{\citenamefont {Trivedi}\ \emph {et~al.}(2021)\citenamefont {Trivedi}, \citenamefont {Malz},\ and\ \citenamefont {Cirac}}]{TrivediMalzCirac2021}%
  \BibitemOpen
  \bibfield  {author} {\bibinfo {author} {\bibfnamefont {R.}~\bibnamefont {Trivedi}}, \bibinfo {author} {\bibfnamefont {D.}~\bibnamefont {Malz}},\ and\ \bibinfo {author} {\bibfnamefont {J.~I.}\ \bibnamefont {Cirac}},\ }\href {https://doi.org/10.1103/PhysRevLett.127.250404} {\bibfield  {journal} {\bibinfo  {journal} {Phys. Rev. Lett.}\ }\textbf {\bibinfo {volume} {127}},\ \bibinfo {pages} {250404} (\bibinfo {year} {2021})}\BibitemShut {NoStop}%
\bibitem [{\citenamefont {Lambert}\ \emph {et~al.}(2019)\citenamefont {Lambert}, \citenamefont {Ahmed}, \citenamefont {Cirio},\ and\ \citenamefont {Nori}}]{Lambert2019}%
  \BibitemOpen
  \bibfield  {author} {\bibinfo {author} {\bibfnamefont {N.}~\bibnamefont {Lambert}}, \bibinfo {author} {\bibfnamefont {S.}~\bibnamefont {Ahmed}}, \bibinfo {author} {\bibfnamefont {M.}~\bibnamefont {Cirio}},\ and\ \bibinfo {author} {\bibfnamefont {F.}~\bibnamefont {Nori}},\ }\href {https://doi.org/10.1038/s41467-019-11656-1} {\bibfield  {journal} {\bibinfo  {journal} {Nat. Commun.}\ }\textbf {\bibinfo {volume} {10}},\ \bibinfo {pages} {3721} (\bibinfo {year} {2019})}\BibitemShut {NoStop}%
\bibitem [{\citenamefont {Pleasance}\ \emph {et~al.}(2020)\citenamefont {Pleasance}, \citenamefont {Garraway},\ and\ \citenamefont {Petruccione}}]{Pleasance2020}%
  \BibitemOpen
  \bibfield  {author} {\bibinfo {author} {\bibfnamefont {G.}~\bibnamefont {Pleasance}}, \bibinfo {author} {\bibfnamefont {B.~M.}\ \bibnamefont {Garraway}},\ and\ \bibinfo {author} {\bibfnamefont {F.}~\bibnamefont {Petruccione}},\ }\href {https://doi.org/10.1103/PhysRevResearch.2.043058} {\bibfield  {journal} {\bibinfo  {journal} {Phys. Rev. Res.}\ }\textbf {\bibinfo {volume} {2}},\ \bibinfo {pages} {043058} (\bibinfo {year} {2020})}\BibitemShut {NoStop}%
\bibitem [{\citenamefont {Park}\ \emph {et~al.}(2024)\citenamefont {Park}, \citenamefont {Huang}, \citenamefont {Zhu}, \citenamefont {Yang}, \citenamefont {Chan},\ and\ \citenamefont {Lin}}]{ParkHuangZhuetal2024}%
  \BibitemOpen
  \bibfield  {author} {\bibinfo {author} {\bibfnamefont {G.}~\bibnamefont {Park}}, \bibinfo {author} {\bibfnamefont {Z.}~\bibnamefont {Huang}}, \bibinfo {author} {\bibfnamefont {Y.}~\bibnamefont {Zhu}}, \bibinfo {author} {\bibfnamefont {C.}~\bibnamefont {Yang}}, \bibinfo {author} {\bibfnamefont {G.~K.-L.}\ \bibnamefont {Chan}},\ and\ \bibinfo {author} {\bibfnamefont {L.}~\bibnamefont {Lin}},\ }\href {https://doi.org/10.1103/PhysRevB.110.195148} {\bibfield  {journal} {\bibinfo  {journal} {Phys. Rev. B}\ }\textbf {\bibinfo {volume} {110}},\ \bibinfo {pages} {195148} (\bibinfo {year} {2024})}\BibitemShut {NoStop}%
\bibitem [{\citenamefont {Thoenniss}\ \emph {et~al.}(2025)\citenamefont {Thoenniss}, \citenamefont {Vilkoviskiy},\ and\ \citenamefont {Abanin}}]{ThoennissVilkoviskiyAbanin2025}%
  \BibitemOpen
  \bibfield  {author} {\bibinfo {author} {\bibfnamefont {J.}~\bibnamefont {Thoenniss}}, \bibinfo {author} {\bibfnamefont {I.}~\bibnamefont {Vilkoviskiy}},\ and\ \bibinfo {author} {\bibfnamefont {D.~A.}\ \bibnamefont {Abanin}},\ }\href {https://doi.org/10.1103/h8g7-bmng} {\bibfield  {journal} {\bibinfo  {journal} {Phys. Rev. B}\ }\textbf {\bibinfo {volume} {112}},\ \bibinfo {pages} {155114} (\bibinfo {year} {2025})}\BibitemShut {NoStop}%
\bibitem [{\citenamefont {Mascherpa}\ \emph {et~al.}(2020)\citenamefont {Mascherpa}, \citenamefont {Smirne}, \citenamefont {Somoza}, \citenamefont {Fern{\'a}ndez-Acebal}, \citenamefont {Donadi}, \citenamefont {Tamascelli}, \citenamefont {Huelga},\ and\ \citenamefont {Plenio}}]{MascherpaSmirneSomozaEtAl2020}%
  \BibitemOpen
  \bibfield  {author} {\bibinfo {author} {\bibfnamefont {F.}~\bibnamefont {Mascherpa}}, \bibinfo {author} {\bibfnamefont {A.}~\bibnamefont {Smirne}}, \bibinfo {author} {\bibfnamefont {A.~D.}\ \bibnamefont {Somoza}}, \bibinfo {author} {\bibfnamefont {P.}~\bibnamefont {Fern{\'a}ndez-Acebal}}, \bibinfo {author} {\bibfnamefont {S.}~\bibnamefont {Donadi}}, \bibinfo {author} {\bibfnamefont {D.}~\bibnamefont {Tamascelli}}, \bibinfo {author} {\bibfnamefont {S.~F.}\ \bibnamefont {Huelga}},\ and\ \bibinfo {author} {\bibfnamefont {M.~B.}\ \bibnamefont {Plenio}},\ }\href {https://doi.org/10.1103/PhysRevA.101.052108} {\bibfield  {journal} {\bibinfo  {journal} {Phys. Rev. A}\ }\textbf {\bibinfo {volume} {101}},\ \bibinfo {pages} {052108} (\bibinfo {year} {2020})}\BibitemShut {NoStop}%
\bibitem [{\citenamefont {Lednev}\ \emph {et~al.}(2024)\citenamefont {Lednev}, \citenamefont {Garc{\'\i}a-Vidal},\ and\ \citenamefont {Feist}}]{LednevGarciaFeist2024}%
  \BibitemOpen
  \bibfield  {author} {\bibinfo {author} {\bibfnamefont {M.}~\bibnamefont {Lednev}}, \bibinfo {author} {\bibfnamefont {F.~J.}\ \bibnamefont {Garc{\'\i}a-Vidal}},\ and\ \bibinfo {author} {\bibfnamefont {J.}~\bibnamefont {Feist}},\ }\href {https://doi.org/10.1103/PhysRevLett.132.106902} {\bibfield  {journal} {\bibinfo  {journal} {Phys. Rev. Lett.}\ }\textbf {\bibinfo {volume} {132}},\ \bibinfo {pages} {106902} (\bibinfo {year} {2024})}\BibitemShut {NoStop}%
\bibitem [{\citenamefont {Dorda}\ \emph {et~al.}(2014)\citenamefont {Dorda}, \citenamefont {Nuss}, \citenamefont {von~der Linden},\ and\ \citenamefont {Arrigoni}}]{Dorda2014}%
  \BibitemOpen
  \bibfield  {author} {\bibinfo {author} {\bibfnamefont {A.}~\bibnamefont {Dorda}}, \bibinfo {author} {\bibfnamefont {M.}~\bibnamefont {Nuss}}, \bibinfo {author} {\bibfnamefont {W.}~\bibnamefont {von~der Linden}},\ and\ \bibinfo {author} {\bibfnamefont {E.}~\bibnamefont {Arrigoni}},\ }\href {https://doi.org/10.1103/PhysRevB.89.165105} {\bibfield  {journal} {\bibinfo  {journal} {Phys. Rev. B}\ }\textbf {\bibinfo {volume} {89}},\ \bibinfo {pages} {165105} (\bibinfo {year} {2014})}\BibitemShut {NoStop}%
\bibitem [{\citenamefont {Dorda}\ \emph {et~al.}(2015)\citenamefont {Dorda}, \citenamefont {Ganahl}, \citenamefont {Evertz}, \citenamefont {von~der Linden},\ and\ \citenamefont {Arrigoni}}]{Dorda2015}%
  \BibitemOpen
  \bibfield  {author} {\bibinfo {author} {\bibfnamefont {A.}~\bibnamefont {Dorda}}, \bibinfo {author} {\bibfnamefont {M.}~\bibnamefont {Ganahl}}, \bibinfo {author} {\bibfnamefont {H.~G.}\ \bibnamefont {Evertz}}, \bibinfo {author} {\bibfnamefont {W.}~\bibnamefont {von~der Linden}},\ and\ \bibinfo {author} {\bibfnamefont {E.}~\bibnamefont {Arrigoni}},\ }\href {https://doi.org/10.1103/PhysRevB.92.125145} {\bibfield  {journal} {\bibinfo  {journal} {Phys. Rev. B}\ }\textbf {\bibinfo {volume} {92}},\ \bibinfo {pages} {125145} (\bibinfo {year} {2015})}\BibitemShut {NoStop}%
\bibitem [{\citenamefont {Garraway}(1997)}]{Garraway1997}%
  \BibitemOpen
  \bibfield  {author} {\bibinfo {author} {\bibfnamefont {B.~M.}\ \bibnamefont {Garraway}},\ }\href {https://doi.org/10.1103/PhysRevA.55.2290} {\bibfield  {journal} {\bibinfo  {journal} {Phys. Rev. A}\ }\textbf {\bibinfo {volume} {55}},\ \bibinfo {pages} {2290} (\bibinfo {year} {1997})}\BibitemShut {NoStop}%
\bibitem [{\citenamefont {Dalton}\ \emph {et~al.}(2001)\citenamefont {Dalton}, \citenamefont {Barnett},\ and\ \citenamefont {Garraway}}]{Dalton2001}%
  \BibitemOpen
  \bibfield  {author} {\bibinfo {author} {\bibfnamefont {B.~J.}\ \bibnamefont {Dalton}}, \bibinfo {author} {\bibfnamefont {S.~M.}\ \bibnamefont {Barnett}},\ and\ \bibinfo {author} {\bibfnamefont {B.~M.}\ \bibnamefont {Garraway}},\ }\href {https://doi.org/10.1103/PhysRevA.64.053813} {\bibfield  {journal} {\bibinfo  {journal} {Phys. Rev. A}\ }\textbf {\bibinfo {volume} {64}},\ \bibinfo {pages} {053813} (\bibinfo {year} {2001})}\BibitemShut {NoStop}%
\bibitem [{\citenamefont {Mazzola}\ \emph {et~al.}(2009)\citenamefont {Mazzola}, \citenamefont {Maniscalco}, \citenamefont {Piilo}, \citenamefont {Suominen},\ and\ \citenamefont {Garraway}}]{MazzolaManiscallcoPiiloetal2009}%
  \BibitemOpen
  \bibfield  {author} {\bibinfo {author} {\bibfnamefont {L.}~\bibnamefont {Mazzola}}, \bibinfo {author} {\bibfnamefont {S.}~\bibnamefont {Maniscalco}}, \bibinfo {author} {\bibfnamefont {J.}~\bibnamefont {Piilo}}, \bibinfo {author} {\bibfnamefont {K.-A.}\ \bibnamefont {Suominen}},\ and\ \bibinfo {author} {\bibfnamefont {B.~M.}\ \bibnamefont {Garraway}},\ }\href {https://doi.org/10.1103/PhysRevA.80.012104} {\bibfield  {journal} {\bibinfo  {journal} {Phys. Rev. A}\ }\textbf {\bibinfo {volume} {80}},\ \bibinfo {pages} {012104} (\bibinfo {year} {2009})}\BibitemShut {NoStop}%
\bibitem [{\citenamefont {Somoza}\ \emph {et~al.}(2019)\citenamefont {Somoza}, \citenamefont {Marty}, \citenamefont {Lim}, \citenamefont {Huelga},\ and\ \citenamefont {Plenio}}]{Somoza2019}%
  \BibitemOpen
  \bibfield  {author} {\bibinfo {author} {\bibfnamefont {A.~D.}\ \bibnamefont {Somoza}}, \bibinfo {author} {\bibfnamefont {O.}~\bibnamefont {Marty}}, \bibinfo {author} {\bibfnamefont {J.}~\bibnamefont {Lim}}, \bibinfo {author} {\bibfnamefont {S.~F.}\ \bibnamefont {Huelga}},\ and\ \bibinfo {author} {\bibfnamefont {M.~B.}\ \bibnamefont {Plenio}},\ }\href {https://doi.org/10.1103/PhysRevLett.123.100502} {\bibfield  {journal} {\bibinfo  {journal} {Phys. Rev. Lett.}\ }\textbf {\bibinfo {volume} {123}},\ \bibinfo {pages} {100502} (\bibinfo {year} {2019})}\BibitemShut {NoStop}%
\bibitem [{\citenamefont {Chen}\ \emph {et~al.}(2019)\citenamefont {Chen}, \citenamefont {Arrigoni},\ and\ \citenamefont {Galperin}}]{Chen_2019}%
  \BibitemOpen
  \bibfield  {author} {\bibinfo {author} {\bibfnamefont {F.}~\bibnamefont {Chen}}, \bibinfo {author} {\bibfnamefont {E.}~\bibnamefont {Arrigoni}},\ and\ \bibinfo {author} {\bibfnamefont {M.}~\bibnamefont {Galperin}},\ }\href {https://doi.org/10.1088/1367-2630/ab5ec5} {\bibfield  {journal} {\bibinfo  {journal} {New J. Phys.}\ }\textbf {\bibinfo {volume} {21}},\ \bibinfo {pages} {123035} (\bibinfo {year} {2019})}\BibitemShut {NoStop}%
\bibitem [{\citenamefont {Li}(2021)}]{Li2021}%
  \BibitemOpen
  \bibfield  {author} {\bibinfo {author} {\bibfnamefont {X.}~\bibnamefont {Li}},\ }\href@noop {} {\bibfield  {journal} {\bibinfo  {journal} {Physics Letters A}\ }\textbf {\bibinfo {volume} {387}},\ \bibinfo {pages} {127036} (\bibinfo {year} {2021})}\BibitemShut {NoStop}%
\bibitem [{\citenamefont {Lorenzoni}\ \emph {et~al.}(2024)\citenamefont {Lorenzoni}, \citenamefont {Cho}, \citenamefont {Lim}, \citenamefont {Tamascelli}, \citenamefont {Huelga},\ and\ \citenamefont {Plenio}}]{Lorenzoni2024}%
  \BibitemOpen
  \bibfield  {author} {\bibinfo {author} {\bibfnamefont {N.}~\bibnamefont {Lorenzoni}}, \bibinfo {author} {\bibfnamefont {N.}~\bibnamefont {Cho}}, \bibinfo {author} {\bibfnamefont {J.}~\bibnamefont {Lim}}, \bibinfo {author} {\bibfnamefont {D.}~\bibnamefont {Tamascelli}}, \bibinfo {author} {\bibfnamefont {S.~F.}\ \bibnamefont {Huelga}},\ and\ \bibinfo {author} {\bibfnamefont {M.~B.}\ \bibnamefont {Plenio}},\ }\href {https://doi.org/10.1103/PhysRevLett.132.100403} {\bibfield  {journal} {\bibinfo  {journal} {Phys. Rev. Lett.}\ }\textbf {\bibinfo {volume} {132}},\ \bibinfo {pages} {100403} (\bibinfo {year} {2024})}\BibitemShut {NoStop}%
\bibitem [{\citenamefont {Lorenzoni}\ \emph {et~al.}(2025)\citenamefont {Lorenzoni}, \citenamefont {Lacroix}, \citenamefont {Lim}, \citenamefont {Tamascelli}, \citenamefont {Huelga},\ and\ \citenamefont {Plenio}}]{lorenzoni2025}%
  \BibitemOpen
  \bibfield  {author} {\bibinfo {author} {\bibfnamefont {N.}~\bibnamefont {Lorenzoni}}, \bibinfo {author} {\bibfnamefont {T.}~\bibnamefont {Lacroix}}, \bibinfo {author} {\bibfnamefont {J.}~\bibnamefont {Lim}}, \bibinfo {author} {\bibfnamefont {D.}~\bibnamefont {Tamascelli}}, \bibinfo {author} {\bibfnamefont {S.~F.}\ \bibnamefont {Huelga}},\ and\ \bibinfo {author} {\bibfnamefont {M.~B.}\ \bibnamefont {Plenio}},\ }\href {https://doi.org/10.1126/sciadv.ady6751} {\bibfield  {journal} {\bibinfo  {journal} {Sci. Adv.}\ }\textbf {\bibinfo {volume} {11}},\ \bibinfo {pages} {eady6751} (\bibinfo {year} {2025})}\BibitemShut {NoStop}%
\bibitem [{\citenamefont {Cirio}\ \emph {et~al.}(2023)\citenamefont {Cirio}, \citenamefont {Lambert}, \citenamefont {Liang}, \citenamefont {Kuo}, \citenamefont {Chen}, \citenamefont {Menczel}, \citenamefont {Funo},\ and\ \citenamefont {Nori}}]{Cirio2023}%
  \BibitemOpen
  \bibfield  {author} {\bibinfo {author} {\bibfnamefont {M.}~\bibnamefont {Cirio}}, \bibinfo {author} {\bibfnamefont {N.}~\bibnamefont {Lambert}}, \bibinfo {author} {\bibfnamefont {P.}~\bibnamefont {Liang}}, \bibinfo {author} {\bibfnamefont {P.-C.}\ \bibnamefont {Kuo}}, \bibinfo {author} {\bibfnamefont {Y.-N.}\ \bibnamefont {Chen}}, \bibinfo {author} {\bibfnamefont {P.}~\bibnamefont {Menczel}}, \bibinfo {author} {\bibfnamefont {K.}~\bibnamefont {Funo}},\ and\ \bibinfo {author} {\bibfnamefont {F.}~\bibnamefont {Nori}},\ }\href {https://doi.org/10.1103/PhysRevResearch.5.033011} {\bibfield  {journal} {\bibinfo  {journal} {Phys. Rev. Res.}\ }\textbf {\bibinfo {volume} {5}},\ \bibinfo {pages} {033011} (\bibinfo {year} {2023})}\BibitemShut {NoStop}%
\bibitem [{\citenamefont {Menczel}\ \emph {et~al.}(2024)\citenamefont {Menczel}, \citenamefont {Funo}, \citenamefont {Cirio}, \citenamefont {Lambert},\ and\ \citenamefont {Nori}}]{MenczelFunoCirio2024etal}%
  \BibitemOpen
  \bibfield  {author} {\bibinfo {author} {\bibfnamefont {P.}~\bibnamefont {Menczel}}, \bibinfo {author} {\bibfnamefont {K.}~\bibnamefont {Funo}}, \bibinfo {author} {\bibfnamefont {M.}~\bibnamefont {Cirio}}, \bibinfo {author} {\bibfnamefont {N.}~\bibnamefont {Lambert}},\ and\ \bibinfo {author} {\bibfnamefont {F.}~\bibnamefont {Nori}},\ }\href {https://doi.org/10.1103/PhysRevResearch.6.033237} {\bibfield  {journal} {\bibinfo  {journal} {Phys. Rev. Res.}\ }\textbf {\bibinfo {volume} {6}},\ \bibinfo {pages} {033237} (\bibinfo {year} {2024})}\BibitemShut {NoStop}%
\bibitem [{\citenamefont {Cirio}\ \emph {et~al.}(2024)\citenamefont {Cirio}, \citenamefont {Luo}, \citenamefont {Liang}, \citenamefont {Nori},\ and\ \citenamefont {Lambert}}]{CirioLuoLiangetal2024}%
  \BibitemOpen
  \bibfield  {author} {\bibinfo {author} {\bibfnamefont {M.}~\bibnamefont {Cirio}}, \bibinfo {author} {\bibfnamefont {S.}~\bibnamefont {Luo}}, \bibinfo {author} {\bibfnamefont {P.}~\bibnamefont {Liang}}, \bibinfo {author} {\bibfnamefont {F.}~\bibnamefont {Nori}},\ and\ \bibinfo {author} {\bibfnamefont {N.}~\bibnamefont {Lambert}},\ }\href {https://doi.org/10.1103/PhysRevResearch.6.033083} {\bibfield  {journal} {\bibinfo  {journal} {Phys. Rev. Res.}\ }\textbf {\bibinfo {volume} {6}},\ \bibinfo {pages} {033083} (\bibinfo {year} {2024})}\BibitemShut {NoStop}%
\bibitem [{\citenamefont {Mayo}\ and\ \citenamefont {Antoulas}(2007)}]{MayoAntoulas2007}%
  \BibitemOpen
  \bibfield  {author} {\bibinfo {author} {\bibfnamefont {A.}~\bibnamefont {Mayo}}\ and\ \bibinfo {author} {\bibfnamefont {A.~C.}\ \bibnamefont {Antoulas}},\ }\href {https://doi.org/https://doi.org/10.1016/j.laa.2007.03.008} {\bibfield  {journal} {\bibinfo  {journal} {Linear Algebra Appl.}\ }\textbf {\bibinfo {volume} {425}},\ \bibinfo {pages} {634} (\bibinfo {year} {2007})}\BibitemShut {NoStop}%
\bibitem [{\citenamefont {Xu}\ \emph {et~al.}(2022)\citenamefont {Xu}, \citenamefont {Yan}, \citenamefont {Shi}, \citenamefont {Ankerhold},\ and\ \citenamefont {Stockburger}}]{XuYanShietal2022}%
  \BibitemOpen
  \bibfield  {author} {\bibinfo {author} {\bibfnamefont {M.}~\bibnamefont {Xu}}, \bibinfo {author} {\bibfnamefont {Y.}~\bibnamefont {Yan}}, \bibinfo {author} {\bibfnamefont {Q.}~\bibnamefont {Shi}}, \bibinfo {author} {\bibfnamefont {J.}~\bibnamefont {Ankerhold}},\ and\ \bibinfo {author} {\bibfnamefont {J.}~\bibnamefont {Stockburger}},\ }\href {https://doi.org/10.1103/PhysRevLett.129.230601} {\bibfield  {journal} {\bibinfo  {journal} {Phys. Rev. Lett.}\ }\textbf {\bibinfo {volume} {129}},\ \bibinfo {pages} {230601} (\bibinfo {year} {2022})}\BibitemShut {NoStop}%
\bibitem [{\citenamefont {Takahashi}\ \emph {et~al.}(2024)\citenamefont {Takahashi}, \citenamefont {Rudge}, \citenamefont {Kaspar}, \citenamefont {Thoss},\ and\ \citenamefont {Borrelli}}]{Takahashi2024}%
  \BibitemOpen
  \bibfield  {author} {\bibinfo {author} {\bibfnamefont {H.}~\bibnamefont {Takahashi}}, \bibinfo {author} {\bibfnamefont {S.}~\bibnamefont {Rudge}}, \bibinfo {author} {\bibfnamefont {C.}~\bibnamefont {Kaspar}}, \bibinfo {author} {\bibfnamefont {M.}~\bibnamefont {Thoss}},\ and\ \bibinfo {author} {\bibfnamefont {R.}~\bibnamefont {Borrelli}},\ }\href {https://doi.org/10.1063/5.0209348} {\bibfield  {journal} {\bibinfo  {journal} {J. Chem. Phys.}\ }\textbf {\bibinfo {volume} {160}},\ \bibinfo {pages} {204105} (\bibinfo {year} {2024})}\BibitemShut {NoStop}%
\bibitem [{\citenamefont {Zhang}\ \emph {et~al.}(2025)\citenamefont {Zhang}, \citenamefont {Erpenbeck}, \citenamefont {Yu},\ and\ \citenamefont {Gull}}]{ZhangErpenbeckAndreetal2025}%
  \BibitemOpen
  \bibfield  {author} {\bibinfo {author} {\bibfnamefont {L.}~\bibnamefont {Zhang}}, \bibinfo {author} {\bibfnamefont {A.}~\bibnamefont {Erpenbeck}}, \bibinfo {author} {\bibfnamefont {Y.}~\bibnamefont {Yu}},\ and\ \bibinfo {author} {\bibfnamefont {E.}~\bibnamefont {Gull}},\ }\href {https://doi.org/10.1063/5.0273763} {\bibfield  {journal} {\bibinfo  {journal} {J. Chem. Phys.}\ }\textbf {\bibinfo {volume} {162}},\ \bibinfo {pages} {214111} (\bibinfo {year} {2025})}\BibitemShut {NoStop}%
\bibitem [{Note1()}]{Note1}%
  \BibitemOpen
  \bibinfo {note} {There is a subtle concept called the Hamiltonian-induced stability~\cite {ParkHuangZhuetal2024}, which may stabilize the quasi-Lindblad dynamics, even under the CP violation, but in the worst case, the quasi-Lindblad dynamics show the instability.}\BibitemShut {Stop}%
\bibitem [{\citenamefont {Tanimura}\ and\ \citenamefont {Kubo}(1989)}]{Tanimura1989}%
  \BibitemOpen
  \bibfield  {author} {\bibinfo {author} {\bibfnamefont {Y.}~\bibnamefont {Tanimura}}\ and\ \bibinfo {author} {\bibfnamefont {R.}~\bibnamefont {Kubo}},\ }\href {https://doi.org/10.1143/jpsj.58.101} {\bibfield  {journal} {\bibinfo  {journal} {J. Phys. Soc. Jpn.}\ }\textbf {\bibinfo {volume} {58}},\ \bibinfo {pages} {101} (\bibinfo {year} {1989})}\BibitemShut {NoStop}%
\bibitem [{\citenamefont {Dunn}\ \emph {et~al.}(2019)\citenamefont {Dunn}, \citenamefont {Tempelaar},\ and\ \citenamefont {Reichman}}]{DunnTempelaarReichman2019}%
  \BibitemOpen
  \bibfield  {author} {\bibinfo {author} {\bibfnamefont {I.~S.}\ \bibnamefont {Dunn}}, \bibinfo {author} {\bibfnamefont {R.}~\bibnamefont {Tempelaar}},\ and\ \bibinfo {author} {\bibfnamefont {D.~R.}\ \bibnamefont {Reichman}},\ }\bibfield  {journal} {\bibinfo  {journal} {J. Chem. Phys.}\ }\textbf {\bibinfo {volume} {150}},\ \href {https://doi.org/10.1063/1.5092616} {10.1063/1.5092616} (\bibinfo {year} {2019})\BibitemShut {NoStop}%
\bibitem [{\citenamefont {Yan}\ \emph {et~al.}(2020)\citenamefont {Yan}, \citenamefont {Xing},\ and\ \citenamefont {Shi}}]{Yan2020heomstability}%
  \BibitemOpen
  \bibfield  {author} {\bibinfo {author} {\bibfnamefont {Y.}~\bibnamefont {Yan}}, \bibinfo {author} {\bibfnamefont {T.}~\bibnamefont {Xing}},\ and\ \bibinfo {author} {\bibfnamefont {Q.}~\bibnamefont {Shi}},\ }\href {https://doi.org/10.1063/5.0027962} {\bibfield  {journal} {\bibinfo  {journal} {J. Chem. Phys.}\ }\textbf {\bibinfo {volume} {153}},\ \bibinfo {pages} {204109} (\bibinfo {year} {2020})}\BibitemShut {NoStop}%
\bibitem [{\citenamefont {Krug}\ and\ \citenamefont {Stockburger}(2023)}]{KrugStockburger2023}%
  \BibitemOpen
  \bibfield  {author} {\bibinfo {author} {\bibfnamefont {M.}~\bibnamefont {Krug}}\ and\ \bibinfo {author} {\bibfnamefont {J.}~\bibnamefont {Stockburger}},\ }\href {https://doi.org/10.1140/epjs/s11734-023-00972-9} {\bibfield  {journal} {\bibinfo  {journal} {Eur. Phys. J. Spec. Top.}\ }\textbf {\bibinfo {volume} {232}},\ \bibinfo {pages} {3219} (\bibinfo {year} {2023})}\BibitemShut {NoStop}%
\bibitem [{Note2()}]{Note2}%
  \BibitemOpen
  \bibinfo {note} {To encode the evolution using a quantum channel, an exponentially large subnormalization factor in $T$ must be introduced, rendering long-time simulation on a quantum computer prohibitively expensive.}\BibitemShut {Stop}%
\bibitem [{\citenamefont {Chen}\ \emph {et~al.}(2023)\citenamefont {Chen}, \citenamefont {Lu}, \citenamefont {Zhang}, \citenamefont {Zhang}, \citenamefont {Huang}, \citenamefont {Qiao}, \citenamefont {Su}, \citenamefont {Zhang}, \citenamefont {Zhang}, \citenamefont {Banchi}, \citenamefont {Kim},\ and\ \citenamefont {Kim}}]{Chen2023}%
  \BibitemOpen
  \bibfield  {author} {\bibinfo {author} {\bibfnamefont {W.}~\bibnamefont {Chen}}, \bibinfo {author} {\bibfnamefont {Y.}~\bibnamefont {Lu}}, \bibinfo {author} {\bibfnamefont {S.}~\bibnamefont {Zhang}}, \bibinfo {author} {\bibfnamefont {K.}~\bibnamefont {Zhang}}, \bibinfo {author} {\bibfnamefont {G.}~\bibnamefont {Huang}}, \bibinfo {author} {\bibfnamefont {M.}~\bibnamefont {Qiao}}, \bibinfo {author} {\bibfnamefont {X.}~\bibnamefont {Su}}, \bibinfo {author} {\bibfnamefont {J.}~\bibnamefont {Zhang}}, \bibinfo {author} {\bibfnamefont {J.-N.}\ \bibnamefont {Zhang}}, \bibinfo {author} {\bibfnamefont {L.}~\bibnamefont {Banchi}}, \bibinfo {author} {\bibfnamefont {M.~S.}\ \bibnamefont {Kim}},\ and\ \bibinfo {author} {\bibfnamefont {K.}~\bibnamefont {Kim}},\ }\href {https://doi.org/10.1038/s41567-023-01952-5} {\bibfield  {journal} {\bibinfo  {journal} {Nat. Phys.}\ }\textbf {\bibinfo {volume} {19}},\ \bibinfo {pages} {877} (\bibinfo {year} {2023})}\BibitemShut {NoStop}%
\bibitem [{\citenamefont {Katz}\ and\ \citenamefont {Monroe}(2023)}]{Katz2023}%
  \BibitemOpen
  \bibfield  {author} {\bibinfo {author} {\bibfnamefont {O.}~\bibnamefont {Katz}}\ and\ \bibinfo {author} {\bibfnamefont {C.}~\bibnamefont {Monroe}},\ }\href {https://doi.org/10.1103/PhysRevLett.131.033604} {\bibfield  {journal} {\bibinfo  {journal} {Phys. Rev. Lett.}\ }\textbf {\bibinfo {volume} {131}},\ \bibinfo {pages} {033604} (\bibinfo {year} {2023})}\BibitemShut {NoStop}%
\bibitem [{\citenamefont {Antoulas}\ and\ \citenamefont {Anderson}(1986)}]{Antoulas1Anderson1986}%
  \BibitemOpen
  \bibfield  {author} {\bibinfo {author} {\bibfnamefont {A.~C.}\ \bibnamefont {Antoulas}}\ and\ \bibinfo {author} {\bibfnamefont {B.}~\bibnamefont {Anderson}},\ }\href {https://doi.org/10.1093/imamci/3.2-3.61} {\bibfield  {journal} {\bibinfo  {journal} {IMA J. Math. Control. Inf.}\ }\textbf {\bibinfo {volume} {3}},\ \bibinfo {pages} {61} (\bibinfo {year} {1986})}\BibitemShut {NoStop}%
\bibitem [{Sup(2025)}]{SuppInfo}%
  \BibitemOpen
  \href@noop {} {\bibinfo {title} {See supplemental material [insert url], which includes refs.~\cite{SheeHuangHeadGordon2025, ITensor, ITensor-r0.3, ADOLPHS20062778, RoyKailath1989, ChenWangZhengetal2022, goulko2025transient}}} (\bibinfo {year} {2025})\BibitemShut {NoStop}%
\bibitem [{\citenamefont {Forn-D\'{\i}az}\ \emph {et~al.}(2019)\citenamefont {Forn-D\'{\i}az}, \citenamefont {Lamata}, \citenamefont {Rico}, \citenamefont {Kono},\ and\ \citenamefont {Solano}}]{RMP_FornDiaz2019}%
  \BibitemOpen
  \bibfield  {author} {\bibinfo {author} {\bibfnamefont {P.}~\bibnamefont {Forn-D\'{\i}az}}, \bibinfo {author} {\bibfnamefont {L.}~\bibnamefont {Lamata}}, \bibinfo {author} {\bibfnamefont {E.}~\bibnamefont {Rico}}, \bibinfo {author} {\bibfnamefont {J.}~\bibnamefont {Kono}},\ and\ \bibinfo {author} {\bibfnamefont {E.}~\bibnamefont {Solano}},\ }\href {https://doi.org/10.1103/RevModPhys.91.025005} {\bibfield  {journal} {\bibinfo  {journal} {Rev. Mod. Phys.}\ }\textbf {\bibinfo {volume} {91}},\ \bibinfo {pages} {025005} (\bibinfo {year} {2019})}\BibitemShut {NoStop}%
\bibitem [{\citenamefont {Kotliar}\ \emph {et~al.}(2006)\citenamefont {Kotliar}, \citenamefont {Savrasov}, \citenamefont {Haule}, \citenamefont {Oudovenko}, \citenamefont {Parcollet},\ and\ \citenamefont {Marianetti}}]{KotliarSavrasovHauleetal2006}%
  \BibitemOpen
  \bibfield  {author} {\bibinfo {author} {\bibfnamefont {G.}~\bibnamefont {Kotliar}}, \bibinfo {author} {\bibfnamefont {S.~Y.}\ \bibnamefont {Savrasov}}, \bibinfo {author} {\bibfnamefont {K.}~\bibnamefont {Haule}}, \bibinfo {author} {\bibfnamefont {V.~S.}\ \bibnamefont {Oudovenko}}, \bibinfo {author} {\bibfnamefont {O.}~\bibnamefont {Parcollet}},\ and\ \bibinfo {author} {\bibfnamefont {C.~A.}\ \bibnamefont {Marianetti}},\ }\href {https://doi.org/10.1103/RevModPhys.78.865} {\bibfield  {journal} {\bibinfo  {journal} {Rev. Mod. Phys.}\ }\textbf {\bibinfo {volume} {78}},\ \bibinfo {pages} {865} (\bibinfo {year} {2006})}\BibitemShut {NoStop}%
\bibitem [{\citenamefont {Aoki}\ \emph {et~al.}(2014)\citenamefont {Aoki}, \citenamefont {Tsuji}, \citenamefont {Eckstein}, \citenamefont {Kollar}, \citenamefont {Oka},\ and\ \citenamefont {Werner}}]{AokiTsujiEcksteinetal2014}%
  \BibitemOpen
  \bibfield  {author} {\bibinfo {author} {\bibfnamefont {H.}~\bibnamefont {Aoki}}, \bibinfo {author} {\bibfnamefont {N.}~\bibnamefont {Tsuji}}, \bibinfo {author} {\bibfnamefont {M.}~\bibnamefont {Eckstein}}, \bibinfo {author} {\bibfnamefont {M.}~\bibnamefont {Kollar}}, \bibinfo {author} {\bibfnamefont {T.}~\bibnamefont {Oka}},\ and\ \bibinfo {author} {\bibfnamefont {P.}~\bibnamefont {Werner}},\ }\href {https://doi.org/10.1103/RevModPhys.86.779} {\bibfield  {journal} {\bibinfo  {journal} {Rev. Mod. Phys.}\ }\textbf {\bibinfo {volume} {86}},\ \bibinfo {pages} {779} (\bibinfo {year} {2014})}\BibitemShut {NoStop}%
\bibitem [{\citenamefont {Shee}\ \emph {et~al.}(2025)\citenamefont {Shee}, \citenamefont {Huang}, \citenamefont {Head-Gordon},\ and\ \citenamefont {Whaley}}]{SheeHuangHeadGordon2025}%
  \BibitemOpen
  \bibfield  {author} {\bibinfo {author} {\bibfnamefont {A.}~\bibnamefont {Shee}}, \bibinfo {author} {\bibfnamefont {Z.}~\bibnamefont {Huang}}, \bibinfo {author} {\bibfnamefont {M.}~\bibnamefont {Head-Gordon}},\ and\ \bibinfo {author} {\bibfnamefont {K.~B.}\ \bibnamefont {Whaley}},\ }\href {https://doi.org/10.1063/5.0249348} {\bibfield  {journal} {\bibinfo  {journal} {J. Chem. Phys.}\ }\textbf {\bibinfo {volume} {162}} (\bibinfo {year} {2025})}\BibitemShut {NoStop}%
\bibitem [{\citenamefont {Fishman}\ \emph {et~al.}(2022{\natexlab{a}})\citenamefont {Fishman}, \citenamefont {White},\ and\ \citenamefont {Stoudenmire}}]{ITensor}%
  \BibitemOpen
  \bibfield  {author} {\bibinfo {author} {\bibfnamefont {M.}~\bibnamefont {Fishman}}, \bibinfo {author} {\bibfnamefont {S.~R.}\ \bibnamefont {White}},\ and\ \bibinfo {author} {\bibfnamefont {E.~M.}\ \bibnamefont {Stoudenmire}},\ }\href {https://doi.org/10.21468/SciPostPhysCodeb.4} {\bibfield  {journal} {\bibinfo  {journal} {SciPost Phys. Codebases}\ ,\ \bibinfo {pages} {4}} (\bibinfo {year} {2022}{\natexlab{a}})}\BibitemShut {NoStop}%
\bibitem [{\citenamefont {Fishman}\ \emph {et~al.}(2022{\natexlab{b}})\citenamefont {Fishman}, \citenamefont {White},\ and\ \citenamefont {Stoudenmire}}]{ITensor-r0.3}%
  \BibitemOpen
  \bibfield  {author} {\bibinfo {author} {\bibfnamefont {M.}~\bibnamefont {Fishman}}, \bibinfo {author} {\bibfnamefont {S.~R.}\ \bibnamefont {White}},\ and\ \bibinfo {author} {\bibfnamefont {E.~M.}\ \bibnamefont {Stoudenmire}},\ }\href {https://doi.org/10.21468/SciPostPhysCodeb.4-r0.3} {\bibfield  {journal} {\bibinfo  {journal} {SciPost Phys. Codebases}\ ,\ \bibinfo {pages} {4}} (\bibinfo {year} {2022}{\natexlab{b}})}\BibitemShut {NoStop}%
\bibitem [{\citenamefont {Adolphs}\ and\ \citenamefont {Renger}(2006)}]{ADOLPHS20062778}%
  \BibitemOpen
  \bibfield  {author} {\bibinfo {author} {\bibfnamefont {J.}~\bibnamefont {Adolphs}}\ and\ \bibinfo {author} {\bibfnamefont {T.}~\bibnamefont {Renger}},\ }\href {https://doi.org/https://doi.org/10.1529/biophysj.105.079483} {\bibfield  {journal} {\bibinfo  {journal} {Biophys. J.}\ }\textbf {\bibinfo {volume} {91}},\ \bibinfo {pages} {2778} (\bibinfo {year} {2006})}\BibitemShut {NoStop}%
\bibitem [{\citenamefont {Roy}\ and\ \citenamefont {Kailath}(1989)}]{RoyKailath1989}%
  \BibitemOpen
  \bibfield  {author} {\bibinfo {author} {\bibfnamefont {R.}~\bibnamefont {Roy}}\ and\ \bibinfo {author} {\bibfnamefont {T.}~\bibnamefont {Kailath}},\ }\href {https://doi.org/10.1109/29.32276} {\bibfield  {journal} {\bibinfo  {journal} {IEEE Trans. Acoust., Speech, Signal Process.}\ }\textbf {\bibinfo {volume} {37}},\ \bibinfo {pages} {984} (\bibinfo {year} {1989})}\BibitemShut {NoStop}%
\bibitem [{\citenamefont {Chen}\ \emph {et~al.}(2022)\citenamefont {Chen}, \citenamefont {Wang}, \citenamefont {Zheng}, \citenamefont {Xu},\ and\ \citenamefont {Yan}}]{ChenWangZhengetal2022}%
  \BibitemOpen
  \bibfield  {author} {\bibinfo {author} {\bibfnamefont {Z.-H.}\ \bibnamefont {Chen}}, \bibinfo {author} {\bibfnamefont {Y.}~\bibnamefont {Wang}}, \bibinfo {author} {\bibfnamefont {X.}~\bibnamefont {Zheng}}, \bibinfo {author} {\bibfnamefont {R.-X.}\ \bibnamefont {Xu}},\ and\ \bibinfo {author} {\bibfnamefont {Y.}~\bibnamefont {Yan}},\ }\href {https://doi.org/10.1063/5.0095961} {\bibfield  {journal} {\bibinfo  {journal} {J. Chem. Phys.}\ }\textbf {\bibinfo {volume} {156}},\ \bibinfo {pages} {221102} (\bibinfo {year} {2022})}\BibitemShut {NoStop}%
\bibitem [{\citenamefont {Goulko}\ \emph {et~al.}(2025)\citenamefont {Goulko}, \citenamefont {Chen}, \citenamefont {Goldstein},\ and\ \citenamefont {Cohen}}]{goulko2025transient}%
  \BibitemOpen
  \bibfield  {author} {\bibinfo {author} {\bibfnamefont {O.}~\bibnamefont {Goulko}}, \bibinfo {author} {\bibfnamefont {H.-T.}\ \bibnamefont {Chen}}, \bibinfo {author} {\bibfnamefont {M.}~\bibnamefont {Goldstein}},\ and\ \bibinfo {author} {\bibfnamefont {G.}~\bibnamefont {Cohen}},\ }\href {https://link.aps.org/doi/10.1103/PhysRevLett.134.056502} {\bibfield  {journal} {\bibinfo  {journal} {Phys. Rev. Lett.}\ }\textbf {\bibinfo {volume} {134}},\ \bibinfo {pages} {056502} (\bibinfo {year} {2025})}\BibitemShut {NoStop}%
\end{thebibliography}%
\newpage 
\clearpage
\thispagestyle{empty}
\onecolumngrid
\begin{center}
\textbf{\large Supplemental Material for \\ Coupled Lindblad pseudomode theory for simulating open quantum systems}
\end{center}
 
\begin{CJK*}{UTF8}{mj}
\begin{center}
Zhen Huang,$^1$ Gunhee Park (박건희),$^{2}$, Garnet Kin-Lic Chan$^3$ and Lin Lin$^{1,4}$\\
\smallskip
\small{\emph{$^1$\DeptMath\\$^2$\Caltechappliedscience\\$^3$\CaltechChem\\$^4$\LBLMath}}\\
(Dated: \today)

\end{center}
\end{CJK*}
\setcounter{equation}{0}
\setcounter{figure}{0}
\setcounter{table}{0}
\setcounter{page}{1}
\setcounter{section}{0}
\setcounter{secnumdepth}{3}
\makeatletter
\renewcommand{\theequation}{S\arabic{equation}}
\renewcommand{\thefigure}{S\arabic{figure}}
\renewcommand{\thesection}{S\arabic{section}}
\renewcommand{\bibnumfmt}[1]{[S#1]}



\onecolumngrid

\section{Realization-based coupled Lindblad modes construction}
\label{sec:realization}
In this section, we describe a new procedure for obtaining coupled Lindblad parameters directly from the BCF in the frequency domain, motivated by the \emph{realization problem} in control theory~\cite{Antoulas1Anderson1986, MayoAntoulas2007}. It is worth noting that the fitting of $C^\text{q}(t)$ in \cref{eq:matching_quasi} is often performed in the time domain, using signal processing algorithms such as ESPRIT~\cite{RoyKailath1989, ParkHuangZhuetal2024} and Prony algorithm \cite{ChenWangZhengetal2022}. However, if the bath information is provided in the frequency domain rather than in the time domain, a Fourier transform of the spectral density is required, which can introduce additional approximation errors, particularly when \( J(\omega) \) is only available on a discrete frequency grid with limited accuracy.

The realization problem aims to fit a given scalar function \( f \) in the form \( f(\omega) = l^\dagger (K - \omega I)^{-1} r \). The algorithm is highly robust and only requires applying a singular value decomposition (SVD) to the \emph{Loewner matrix}~\cite{MayoAntoulas2007} constructed from the sampled data \( f(\omega) \).

In our setting, given \( \widetilde{C}(\omega) \) sampled on a frequency grid, we seek parameters satisfying 
\begin{equation}
\widetilde{C}(\omega) \approx \mathrm{Im}(g^\dagger (K - \omega I)^{-1}g), \quad K = H - \mathrm{i} \Gamma. 
\label{eqn:matching_freq}
\end{equation}
This introduces two key differences from the conventional realization problem: (1) we only have access to the imaginary part of the meromorphic function; (2) we need to enforce the constraints \( l = r \) and \( \Gamma = (K^\dagger - K)/2\mathrm{i} \succeq 0 \).

To address these challenges, we first observe that
$\widetilde{C}(\omega) = \mathrm{Im}(g^\dagger (K - \omega I)^{-1}g) = \frac{1}{2 \mathrm i}  \mathsf{g}_{l}^\dagger (\mathsf{K} - \omega I)^{-1} \mathsf{g}_r$, where \( \mathsf{g}_l = \begin{pmatrix} g \\ \overline{g} \end{pmatrix} \), \( \mathsf{g}_r = \begin{pmatrix} g \\ -\overline{g} \end{pmatrix} \), and \( \mathsf{K} = \operatorname{diag}(K, \overline{K}) \). Thus, by fitting \( \widetilde{C}(\omega) \) within the standard realization framework, we obtain an approximation of the form $\widetilde C(\omega) = \frac{1}{2\pi \mathrm{i}} \widetilde{l}^\dagger (\widetilde{K} - \omega I)^{-1} \widetilde{r}$, where \( \widetilde{l} \), \( \widetilde{r} \), and \( \widetilde{K} \) are related to $\mathsf g_l$, $\mathsf g_r$ and $\mathsf K$ via an undetermined gauge. We then determine this gauge by enforcing the physical constraints, leading to an SDP problem similar to Eq.~\eqref{eq:feasibility}, which can be solved efficiently via a robust SDP subroutine.

\section{Coupled Lindblad pseudomodes theory for multi-site cases}
\label{sec:multi} 
The generalization from a single-site to a multi-site spin-boson system  is straightforward. The key difference is in the system-bath coupling term $\hat H_{\text{SA}}$, which is characterized by a coupling coefficient matrix $\textbf g$ of size $ N\times n$:
\begin{equation}
    \hat H_{\text{SA}} = \sum_{j=1}^n \hat S_j\hat A_j,\quad \hat A_j = \sum_{k=1}^N \textbf g_{kj} \hat b_k + \overline{\textbf g_{kj}}\hat b_k^\dagger.
    \label{eq:H_SA_multi}
\end{equation}
Here $N$ is the number of pseudomodes and $n$ is the number of terms in $\hat H_{\text{SA}}$. In other words, the generalization falls upon replacing the coupling vector $g$ with a matrix $\textbf g$. The corresponding BCF, which is a $n\times n$ matrix-valued function, takes the following form: 
\begin{equation}
    C^{\text{c}}(t) =\textbf g^\dagger \mathrm e^{-\mathrm iKt}\textbf g,\quad 
    \textbf g\in\mathbb C^{N\times n} , \quad K = H-\mathrm i\Gamma,\quad H = H^\dagger,\quad\Gamma\succeq 0.
\end{equation}
Therefore, the coupled Lindblad dynamics takes the following form:
\begin{equation}
\frac{\mathrm d\hat\rho}{\mathrm dt}= -\mathrm i [\hat H_{\text S}+\hat H_{\text A}+\hat H_{\text{SA}},\hat\rho] + \boldsymbol{D}_{\text A}(\hat\rho),
\end{equation}
where $\hat H_{\text S}$ is the system Hamiltonian, $\hat H_{\text{SA}}$ is the system-bath coupling Hamiltonian as in \cref{eq:H_SA_multi}, and $\hat H_{\text A}$, $\boldsymbol{D}_{\text A}$ are the bath Hamiltonian and dissipation as in \cref{eq:coupled_Lindblad}.
\section{Coupled Lindblad pseudomode theory for fermionic systems }
\label{sec:fermion}
We recall that in the bosonic case, the system-bath coupling takes the form $\hat H_{\text{SB}} = \hat S\hat B$, where $\hat S$ is Hermitian and $B = \int \sqrt{J(\omega)}(\hat b_\omega + \hat b_\omega^\dagger)$ is also Hermitian. As a result,  only one bath correlation function is required to capture the bath's influence on the system. However, in general, when $S$ is non-Hermitian and thus the system-bath coupling takes the form $\hat H_{\text{SB}} = \hat S\hat B + \hat B^\dagger\hat S^\dagger$, then two different BCFs $\langle B(t) B^\dagger\rangle$ and $\langle B^\dagger(t)B \rangle$ will arise. This is precisely the case for the fermionic impurity problems, in which we need to consider both the lesser and the greater BCFs.

For fermionic cases, let us consider an impurity Hamiltonian $\hat H_{\text S}(\hat a_i,\hat a_i^\dagger)$ coupled to a bath with chemical potential $\mu$ via the following system-bath coupling $\hat H_{\text{SB}}$:
\begin{equation}
    \hat H_{\text{SB}} = \sum_{i=1}^n\int\mathrm d\omega  f_i(\omega) \hat a_i^\dagger \hat c_{\omega} + \text{h.c.},
    \label{eq:fermion_SB}
\end{equation}
The bath's influence on the system is characterized by the lesser and greater hybridization functions ($\Delta^<(t)$ and $\Delta^>(t)$):

\begin{equation}
    \Delta^<(t) = \int J(\omega) f_{\text{FD}}^{\mu,\beta}(\omega)\mathrm e^{-\mathrm i\omega t}\mathrm d\omega, \quad  \Delta^>(t) = \int J(\omega) (1 - f_{\text{FD}}^{\mu,\beta}(\omega))\mathrm e^{-\mathrm i\omega t}\mathrm d\omega. 
    \label{eq:real_hyb}
\end{equation}
Here $J(\omega)=f_i(\omega)\overline{f_j(\omega)}$ is the bath spectral density, $\mu$ is the chemical potential, $\beta$ is the inverse temperature, and $f_{\text{FD}}^{\mu,\beta}(\omega) = \frac{1}{1+\mathrm e^{\beta(\omega-\mu)}}$ is the Fermi-Dirac function. To apply the pseudomode theory for fermionic cases, the key is to account for both lesser  and greater real-time hybridization functions  arising in the problem. This is similar to what is done in quasi-Lindblad theories \cite[Sec II. D]{ParkHuangZhuetal2024}. We conduct fitting for both $\Delta^<(t)$ and $\Delta^>(t)$ as follows:
\begin{equation}
    \Delta^<(t)\approx \Delta^{\text c,<}(t) = (\textbf g^<)^\dagger \mathrm e^{(-\mathrm iH^<-\Gamma^<)t}\textbf g^<,\quad 
   \Delta^>(t)\approx \Delta^{\text c,>}(t) = (\textbf g^>)^\dagger \mathrm e^{(-\mathrm iH^>-\Gamma^>)t}\textbf g^>.
   \label{eq:hyb_fitting}
\end{equation}
Here $\textbf g^<$, $H^<$, $\Gamma^<$ are of size $(n,N_1)$, $(N_1,N_1)$ and $(N_1,N_1)$ and $\textbf g^>$, $H^>$, $\Gamma^>$ are of size $(n,N_2)$, $(N_2,N_2)$ and $(N_2,N_2)$. With the hybridization fitting \cref{eq:hyb_fitting}, the coupled Lindblad dynamics is as follows:
\begin{equation}
\begin{aligned}
   \frac{\mathrm d}{\mathrm dt}\hat\rho_{\text{SA}} & = 
    -\mathrm i[\hat H_{\text S},\hat\rho]  -\mathrm i[\hat H_{\text A_1} + \hat H_{\text{SA}_1},\hat\rho] -\mathrm i[ \hat H_{\text A_2} + \hat H_{\text{SA}_2},\hat\rho]+ 
    \boldsymbol{D}_{\text{A}_1}(\hat\rho) + \boldsymbol{D}_{\text{A}_2}(\hat\rho), \quad \\
   & \hat H_{\text A_1}  =\sum_{k,l=1}^{N_1} H^<_{kl} \hat c_k^\dagger \hat c_l,\quad \hat H_{\text{SA}_1} = \sum_{i=1}^n \sum_{k=1}^{N_1} (\textbf g^<)_{ki} \hat c_k^\dagger \hat a_i+\text{h.c.},
   \\
& \hat H_{\text A_2}  =\sum_{k,l=1}^{N_2} H^>_{kl} \hat d_k^\dagger \hat d_l,\quad
    \hat H_{\text{SA}_2} = \sum_{i=1}^n \sum_{k=1}^{N_2} (\textbf g^>)_{ki} \hat d_k^\dagger \hat a_i +\text{h.c.} 
    ,\\ & \boldsymbol{D}_{\text{A}_1}(\hat\rho) = \sum_{k,l=1}^{N_2} \Gamma^<_{kl} (2\hat c_k^\dagger\hat\rho\hat c_l - \{\hat c_l \hat c_k^\dagger,\hat\rho\} ),\quad \boldsymbol{D}_{\text{A}_2}(\hat\rho) = \sum_{k,l=1}^{N_2} \Gamma^>_{kl} (2\hat d_l\hat\rho\hat d_k^\dagger - \{\hat d_k^\dagger \hat d_l,\hat\rho\} ).\\
    &\hat\rho(0) = \hat\rho_{\text S}(0)\otimes\bigotimes_{k=1}^{N_1}|1\rangle\langle 1|\otimes \bigotimes_{l=1}^{N_2}|0\rangle\langle 0|.
\end{aligned}
\label{eq:fermion_coupled_lind}
\end{equation}
We refer to \cite{HuangLinParketal2024} for details of the proof of correctness for the coupled Lindblad dynamics \cref{eq:fermion_coupled_lind}.

\section{TD-DMRG based simulation}
\label{sec:dmrg}
To solve the coupled Lindblad dynamics, we first rewrite the dynamics of the density operator in the superoperator formalism. We use the time-dependent density matrix renormalization group (TD-DMRG) method to evolve the density operators, which are propagated using the time-dependent variational principle (TDVP), implemented in Julia packages \texttt{ITensors.jl} and \texttt{ITensorMPS.jl} \cite{ITensor, ITensor-r0.3}. As for the ordering of sites, we follow \cite{ParkHuangZhuetal2024}, in which we order the sites based on the magnitude of the dissipation. We set the cutoff threshold $\epsilon=10^{-12}$. After each TDVP step, we normalize the state to have a trace 1, i.e., $\operatorname{tr}(\hat\rho)=1$. To evaluate any physical observable $\hat O$, we calculate the trace $\operatorname{tr}(\hat O\hat\rho)$. We remark that to take traces of any operator $\hat A$, in the superoperator formalism, it means to calculate the inner product $\langle\langle I|A\rangle\rangle$, where $|I\rangle\rangle$ is the vectorization of the identity operator, and which can be explicitly constructed as a matrix product state.

For benchmarking purposes, we calculate the reference system density dynamics for both the spin-boson model and for the fermionic impurity model presented in the next section. This is enabled by a standard efficient unitary discretization of the bath using Gaussian quadrature and Legendre polynomials \cite{DeVegaInesSchollwock2015}, with $N = 400$ orbitals in the former case and $N=200$ spin-orbitals in the latter case. 

\section{Numerical experiments on the fermionic Anderson impurity model}
\label{sec:fermion_test}
Here we numerically demonstrate the applicability of our theories to fermionic problems in \cref{fig:fermion}.

\begin{figure}[ht]
    \centering
    \includegraphics[width=0.65\linewidth]{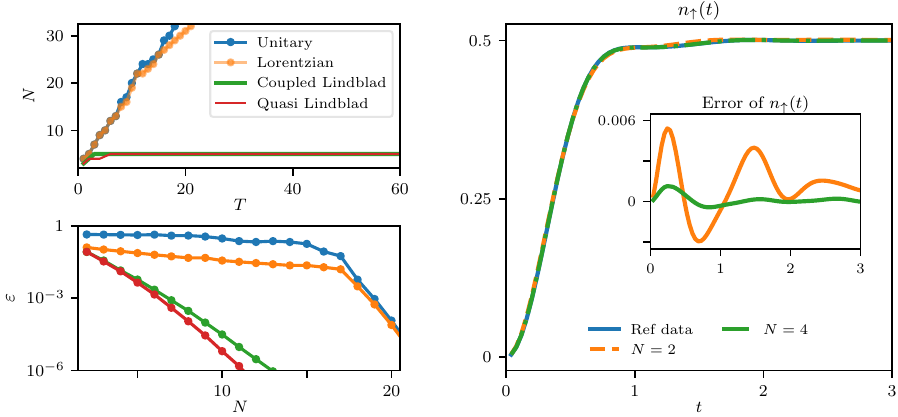}
    \caption{Numerical experiments on the Fermionic Anderson impurity model with semicircular bath spectral density using the coupled Lindblad approach. (Left) Results of BCF fitting. (Right) Dynamics of $n_\uparrow(t)$.}
    \label{fig:fermion}
\end{figure}

In \cref{fig:fermion} (left), similar to Fig. 1, we present the fitting of the BCF corresponding to the semicircular spectral density, $J(\omega) = \frac{\Gamma}{\pi}\sqrt{1 - \frac{\omega^2}{W^2}}$ with half bandwidth $W=10$, $\Gamma = 1$, and inverse temperature $\beta =100$. We consider a single-impurity Anderson model with impurity Hamiltonian $\hat H_{\text{S}} = \epsilon (\hat n_\uparrow+ \hat n_\downarrow) + U\hat n_\uparrow\hat n_\downarrow $ where $\hat{n}_{\uparrow / \downarrow} = \hat{a}^\dagger_{\uparrow / \downarrow}  \hat{a}_{\uparrow / \downarrow}$ with $U=8$ and  $ \epsilon = -4$.  The fermionic system-bath coupling is defined in \cref{eq:fermion_SB}, with the above-mentioned semicircular density $J(\omega)$. On the right, we present the time evolution of \( n_\uparrow(t)  = \langle  \hat{n}_\uparrow \hat{\rho}_S(t) \rangle \) with an initial empty impurity $\hat \rho_{\text{S}}(0) = |0\rangle\langle 0|$. Recall that as mentioned in \cref{eq:fermion_coupled_lind}, $N_1$, $N_2$ are the number of pseudomodes (per spin) for the lesser and greater BCFs, respectively. In this experiment, we take $N_1=N_2=N$. Remarkably, the coupled Lindblad pseudomode framework achieves high accuracy with only a small number of pseudomodes (\( N = 2, 4 \)), as shown in \cref{fig:fermion} (Right).

\section{Additional information on benchmarking  BCF fitting}
\label{sec:BCF_fitting}
In the main text, we have benchmarked the coupled Lindblad approach against various other methods using the Ohmic spectral density in Fig. 1.
The fitting error is evaluated as follows:
\begin{equation}
    \varepsilon = \left(\frac 1 T\int_0^T \left|C(t) - C_{\text{approx}}(t)\right|^2\mathrm dt\right)^{1/2}
    \label{eq:average_L2}
\end{equation}
In addition, here we show the fitting results in the time domain \rev{for the Ohmic spectral density} using $N=4$ modes in \cref{fig:compare_Ct}.
\begin{figure}[ht]
    \centering
    \includegraphics[width=0.6\linewidth]{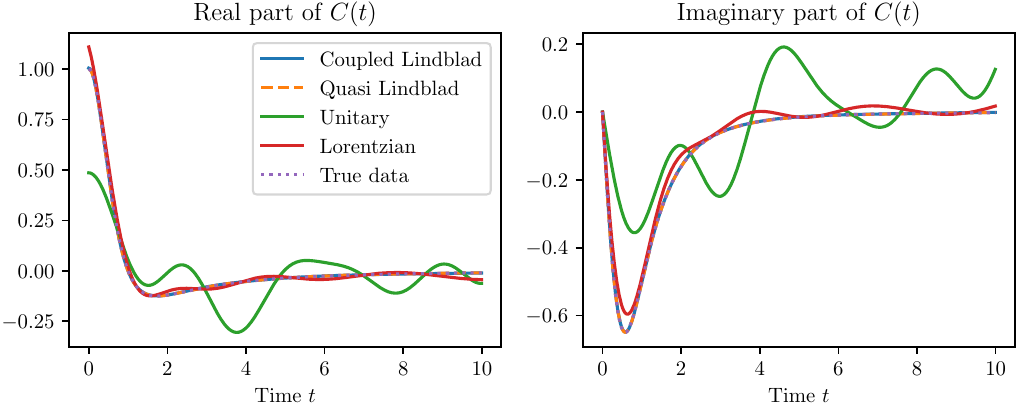}
    \caption{Comparison of unitary, Lorentzian, coupled Lindblad (this work) and quasi-Lindblad pseudomodes for fitting the same BCF using $N=4$ modes. }
    \label{fig:compare_Ct}
\end{figure}

\rev{To further demonstrate the expressibility of the coupled Lindblad modes, here we also show the fitting results for sub-Ohmic densities $J(\omega) = \omega^\alpha\mathrm e^{-\omega/\omega_c}$ with $\alpha=1/2$ and $\omega_c = 1.0$. Sub-Ohmic densities are known to be more challenging to simulate compared to Ohmic and super-Ohmic densitites  due to its singularity at $\omega=0$, and exhibits very rich dynamical phases (see \cite{goulko2025transient} for example). Our results in \cref{fig:Ohmic} exhibits the same $O(\mathrm{poly}\log(T/\varepsilon))$ scaling as in other examples, and achieve very accurate fitting with only $N=4$ bath modes.
}
\begin{figure}
    \includegraphics[width=100mm]{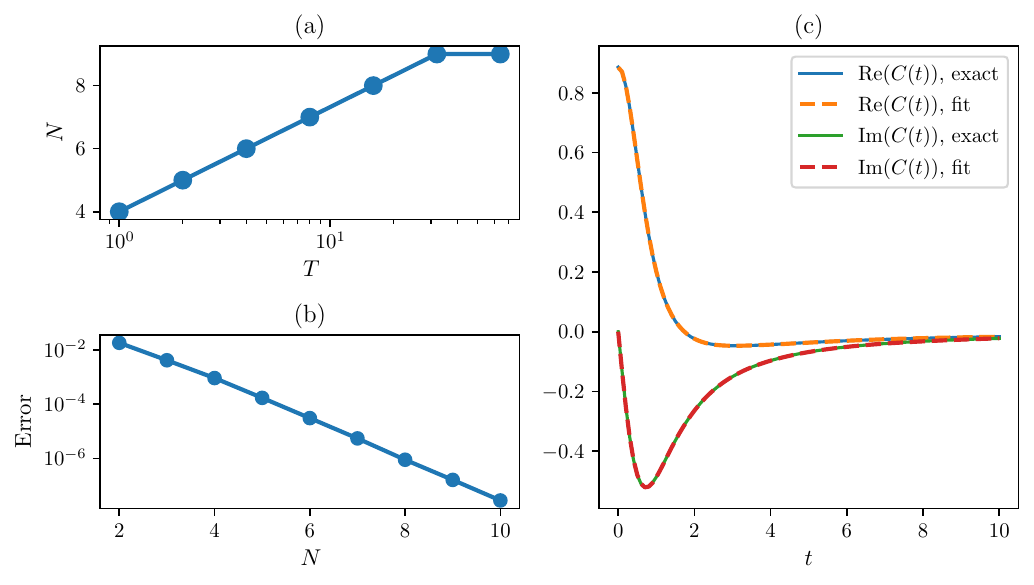}
    \caption{\rev{Results for sub-Ohmic density $J(\omega) = \omega^{1/2}\mathrm e^{-\omega}$ using coupled Lindblad modes. (a) Number of coupled modes $N$ v.s. maximum simulation time $T$ for reaching $\varepsilon=10^{-5}$ accuracy. (b) Error v.s. number of coupled modes, for fixed maximum simulation time $T=10$. (c) Fitting results of $C(t)$, with $N=4$ coupled modes.}}
    \label{fig:Ohmic}
\end{figure}

\section{Additional Information on calculating the absorption spectrum}
\label{sec:absorption}
Finally, we comment on the evaluation of real-time correlation functions. Following \cite{MascherpaSmirneSomozaEtAl2020}, the correlation function takes the form $C_{\hat{\mu}}(t) = \operatorname{tr}\left(\hat{\mu}^\dagger \mathrm e^{\mathcal Lt} \hat{\mu} \hat{\rho}_{\text S}(0) \otimes \hat\rho_{\text B}(0)\right)$,  where $\mathcal L$ is the Liouvillian superoperator corresponding to the coupled Lindblad dynamics, $\hat{\rho}_{\text S}(0) = |g \rangle \! \langle g |$, and $\hat{\mu} = \sum_i \ket{\epsilon_i}\!\bra{g} + \ket{g}\!\bra{\epsilon_i}$. The absorption spectrum is obtained via the Fourier transform of $C_{\hat{\mu}}(t)$, aided with ESPRIT as done in \cite{SheeHuangHeadGordon2025}. We calculate up to time $T=5000$ with time step $\Delta t = 0.0005$.

In this example, we use two  external environments, $J_0(\omega)$ and $J_1(\omega)$, with distinct features (see \cref{fig:j0_j1}), adapted from \cite{MascherpaSmirneSomozaEtAl2020}. $J_0(\omega)$ features broad spectrum, originally proposed in \cite{ADOLPHS20062778}, known as the Adolphs-Renger form. $J_1(\omega) = J_0(\omega) + J_{\text{AL}}(\omega)$ has an additional anti-symmetrized Lorentzian peak, 
\begin{equation}
    J_{\text{AL}}(\omega) = S \frac{8 \Gamma \Omega (4\Omega^2 + \Gamma^2)\omega}{(4(\omega - \Omega)^2+\Gamma^2) (4(\omega + \Omega)^2+\Gamma^2)}. 
\end{equation}
The parameters in this form are taken from \cite{MascherpaSmirneSomozaEtAl2020}.

The goal is to verify that our coupled Lindblad pseudomode theory works well under different external environments.
Although the spectrum under these two environments varies significantly (see \cref{fig:absorption_spec}), both cases are well captured by three coupled modes as shown in \cref{fig:absorption_spec}.

\begin{figure}[ht]
    \centering
    \includegraphics[width=0.6\linewidth]{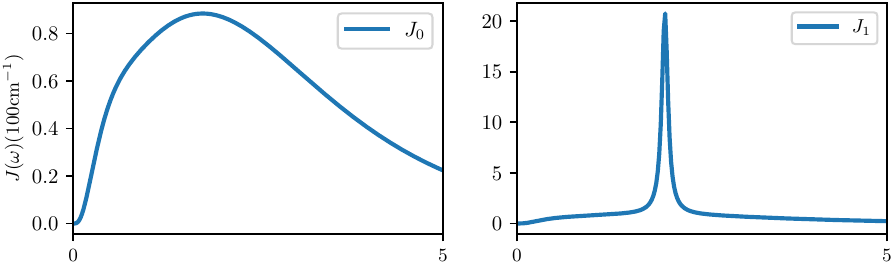}
    \caption{Spectral function of the bath environments $J_0(\omega)$ and $J_1(\omega)$.}
    \label{fig:j0_j1}
\end{figure}

\end{document}